\newcommand{\cT}{\mathcal T}
\newcommand{\cL}{\mathcal L}
\newcommand{\cF}{\mathcal F}
\newcommand{\cX}{\mathcal X}
\newcommand{\scAG}{\textsc{AG}}
\newcommand{\scLCA}{\textsc{LCA}}
\newcommand{\scAMa}{\textsc{allMAAFs}$^1$}
\newcommand{\scAMSa}{\textsc{allMAFs}$^1$}
\newcommand{\scAMc}{\textsc{allMAAFs}$^2$}
\newtheorem{observation}{Observation}
\def\fc{$\ensuremath{\wedge}$}
\def\fcL{\bigwedge}
\title{Computing a Relevant Set of Nonbinary Maximum Acyclic Agreement Forests} 
\author{Benjamin Albrecht
\thanks{
Benjamin Albrecht\\
Institut f\"ur Informatik, Ludwig-Maximilians-Universit\"at, Germany\\
Tel.: +49-89-2180-4069\\
E-mail: \email{albrecht@bio.ifi.lmu.de}}
}
\institute{Institut f\"ur Informatik, Ludwig-Maximilians-Universit\"at, Germany}
\date{\today}
\begin{document}
\maketitle
\vspace{3cm}

\pagestyle{plain}
\pagenumbering{arabic}

\begin{abstract}
There exist several methods dealing with the reconstruction of rooted phylogenetic networks explaining different evolutionary histories given by rooted binary phylogenetic trees. In practice, however, due to insufficient information of the underlying data, phylogenetic trees are in general not completely resolved and, thus, those methods can often not be applied to biological data. In this work, we make a first important step to approach this goal by presenting the first algorithm --- called \textsc{allMulMAAFs} --- that enables the computation of all relevant nonbinary maximum acyclic agreement forests for two rooted (nonbinary) phylogenetic trees on the same set of taxa. Notice that our algorithm is part of the freely available software Hybroscale computing minimum hybridization networks for a set of rooted (nonbinary) phylogenetic trees on an overlapping set of taxa.\\

\textbf{Keywords} Directed Acyclic Graphs $\cdot$ Hybridization $\cdot$ Agreement Forests $\cdot$ Bounded Search $\cdot$ Phylogenetics

\end{abstract}

\newpage
\section{Introduction}

Evolution is often modeled by a rooted phylogenetic tree, which is a rooted tree whose edges are directed from its root to its leaf. The leaf set of such a tree is typically labeled by a set of species or strains of species and is in general called taxa set. Moreover, each internal node has in-degree one and out-degree two representing a certain speciation event. In applied phylogenetics, however, trees can contain nodes of out-degree larger or equal to three because, often, in order to resolve a certain ordering of speciation events, there is only insufficient information available and the common way to model this uncertainty is to use \emph{nonbinary} nodes. This means, in particular, that \emph{nonbinary} rooted phylogenetic trees contain several binary trees each each representing a different order of speciation events. 

Note that there are two ways of interpreting a nonbinary node, which is also called \emph{multifurcation} or \emph{polytomy}. A multifurcation is \emph{soft}, if it represents a lack of resolution of the true relationship as already mentioned above. A \emph{hard} multifurcation, in contrast, represent a simultaneous speciation event of at least three species. However, as such events are assumed to be rare in nature, in this paper we consider all multifurcations of a pyhlogenetic tree as being soft.

A phylogenetic tree is usually built on homologous genes corresponding to a certain set of species. Thus, when constructing two phylogenetic trees based on two different homologous genes each belonging to the same set of species, one usually expects that both inferred trees are identical. Due to biological reasons, apart from other more technical reasons, however, these trees can be incongruent. One of those biological reasons are reticulation events that are processes including hybridization, horizontal gene transfer, or recombination. Now, based on such trees, one is interested to investigate the underlying reticulate evolution which can be done, for example, by computing rooted phylogenetic networks being structures similar to phylogenetic trees but, in contrast, can contain nodes of in-degree larger or equal to two. In a biological context, each of those nodes in such a network represents a certain reticulation event and, thus, a node with multiple incoming edges is called reticulation node or, in respect of hybridization, hybridization node. 

Given two rooted binary phylogenetic trees $T_1$ and $T_2$ on the same set of taxa $\cX$, a phylogenetic network, also called hybridization network, can be calculated by applying two major steps. In a first step, one can compute an agreement forest consisting of edge disjoint subtrees each being part of both input trees. Moreover, this set of subtrees $\cF$ has to be a partition on $\cX$ that is not allowed to have any conflicting ancestral relations in terms of $T_1$ and $T_2$. This means, in particular, for each pair of subtrees $F_1$ and $F_2$ in $\cF$, if regarding $T_1$ the subtree corresponding to $F_1$ lies above or below the subtree corresponding to $F_2$, the same scenario has to be displayed in $T_2$. Notice that, by saying a subtree $F_1$ lies above $F_2$ we mean that there is a directed path leading from the root node corresponding to $F_1$ to the root node corresponding to $F_2$. Moreover, if this property holds for each pair of subtrees, we say that $\cF$ is acyclic. Now, based on such an acyclic agreement forest, in a second step, one can apply a network construction algorithm, e.g., the algorithm $\textsc{HybridPhylogeny}$ \cite{Baroni2006}, which glues together each component by introducing reticulation nodes such that the resulting network displays both input trees. If such a network additionally provides a minimum number of reticulation events, it is called a minimum hybridization network. Notice that, as this is a highly combinatorial problem, for two rooted binary phylogenetic trees there typically exist multiple acyclic agreement forests and a hybridization network corresponding to one of those agreement forests is rarely unique.

Computing maximum acyclic agreement forests, which are acyclic agreement forests of minimum cardinality, is a NP-hard as well as APX-hard problem \cite{Bordewich2007b}. It has been shown, for the binary case, that, based on such a maximum acyclic agreement forest $\cF$, there exists a minimum hybridization network whose number of provided reticulation events is one less than the number of components in $\cF$ \cite{Baroni2005}. While, given two rooted binary phylogenetic trees, there exist some algorithms solving the maximum acyclic agreement forests as well as the minimum hybridization network problem, the nonbinary variant of both problems has attracted only less attention. More precisely, although there exist some approximation algorithms as well as exact algorithms \cite{Iersel2014,Whidden2014} computing maximum nonbinary agreement forests, until now there does not exist an algorithm computing nonbinary minimum hybridization networks.

The algorithm presented in this work was developed in respect to the algorithm \textsc{allHNetworks} \cite{Albrecht2014,Albrecht2015} computing a relevant set of minimum hybridization networks for multiple binary phylogenetic trees on the same set of taxa. Broadly speaking, those networks are computed by inserting each of the input trees step wise to a so far computed network which is done, basically, in three major steps. In a first step, all embedded trees of a so far computed network $N$ are extracted by selecting exactly one in-edge of each reticulation node. Notice that, at the beginning, $N$ is initialized with the first input tree $T_1$ containing only one embedded tree. The more reticulation edges exist in a network, the more embedded trees can be extracted. More precisely, given $r$ reticulation nodes of in-degree $2$, there may exist up to $2^r$ different embedded trees. Now, for each of those embedded trees $T'$, in a second step, all maximum acyclic agreement forests corresponding to $T'$ and the current input tree $T_i$ are computed which is done by applying the algorithm \textsc{allMAAFs} \cite{Scornavacca2012}. Finally, each component (except the root component $F_{\rho}$) of such a maximum acyclic agreement forest is inserted into the so far computed network in a certain way. 

In this article, we present the first non-naive algorithm --- called \textsc{allMulMAAFs} --- calculating a relevant set of nonbinary maximum acyclic agreement forests for two rooted (nonbinary) phylogenetic trees on the same set of taxa, which is a first step to adapt the algorithm \textsc{allHNetworks} to trees containing nodes providing more than two outgoing edges. Our paper is organized as follows. We first introduce some basic definitions. Next, the algorithm \textsc{allMulMAFs} is presented that is extended in a subsequent section to the algorithm \textsc{allMulMAAFs} by introducing a tool through which agreement forests can be turned into acyclic agreement forests. 

\section{Preliminaries}

In this section, we introduce some basic definitions referring to phylogenetic trees, hybridization networks, and nonbinary agreement forests based on the work of Huson \textit{et al.} \cite{Huson2007}, Scornavacca \textit{et al.} \cite{Scornavacca2012}, and van Iersel \textit{et al.} \cite{Iersel2014}. We therefor assume that the reader is familiar with basic graph-theoretic concepts.\\

\textbf{Phylogenetic trees.} A rooted phylogenetic $\cX$-tree $T$ is a tree whose edges are directed from the root to the leaves and whose nodes, except for the root, have a degree not equal to $2$. There exists exactly one node of in-degree $0$, namely the root. Each inner node has in-degree $1$ and out-degree larger or equal to $2$ whereas each leaf has in-degree $1$ and out-degree $0$. If $T$ is a \emph{bifurcating} or \emph{binary} tree, its root has in-degree $0$ and out-degree $2$, each inner node an in-degree of $1$ and an out-degree of $2$, and each leaf an in-degree of $1$ and out-degree $0$. Notice that, in order to emphasize that a tree $T$ can contain inner nodes of in-degree larger than $2$, we say that $T$ is a \emph{multifurcating} or \emph{nonbinary} tree. The leaves of a rooted phylogenetic $\cX$-tree are bijectivley labeled by a taxa set $\cX$, which usually consists of certain species or genes and is denoted by $\cL(T)$. Considering a node $v$ of $T$, the label set $\cL(v)$ refers to each taxon that is contained in the subtree rooted at $v$. Additionally, given a set of phylogenetic trees $\cF$, the label set $\cL(\cF)$ denotes the union of each label set $\cL(F_i)$ of each tree $F_i$ in $\cF$.

Now, based on a taxa set $\cX' \subseteq \cX$, we can define a restricted subtree of a rooted phylogenetic $\cX$-tree, denoted by $T|_{\cX'}$. The restricted tree $T|_{\cX'}$ is computed by first repeatedly deleting each leaf that is either unlabeled or whose taxon is not contained in $\cX'$, resulting in a subgraph denoted by $T(\cX')$, and then by suppressing each node of both in- and out-degree $1$. 

A rooted nonbinary phylogenetic $\cX$-tree $T$ can contain \emph{mulitfurcating} or \emph{nonbinary} nodes, which are nodes of out-degree larger than or equal to $3$. We say a rooted phylogenetic $\cX$-tree $T'$ is a \emph{refinement} of $T$, if we can obtain $T$ from $T'$ by \emph{contracting} some of its edges. More precisely, an edge $e=(u,v)$, with $C_v$ being the set of children of $v$, is contracted by first deleting $v$ together with all of its adjacent edges (including $e$) and then by reattaching each node $c_i$ in $C_v$ back to $u$ by a inserting a new edge $(u,c_i)$. Moreover, in this context we further say that $T'$ is a \emph{binary refinement of $T$}, if $T'$ is binary. Similarly, if $T'$ is a refinement of $T$, we can obtain $T'$ from $T$ by \emph{resolving} some of its multifurcating nodes in the following way (cf.~Fig.~\ref{fig-resNode}). Let $v$ be a multifurcating node and let $C_v=\{c_1,\dots,c_n\}$ be its set of children, then, we can resolve $v$ as follows. First, a new node $w$ is created, which is attached to $v$ by inserting a new edge $(v,w)$. Second, we select a subset $C_v'$ of $C_v$, with $1<|C_v'|<|C_v|$, and, finally, we prune each node $c_i$ of $C_v'$ from $v$ and reattach $c_i$ to $w$ by inserting a new edge $(w,c_i)$.

\begin{figure}[bt]
\centering
\includegraphics[scale = 1]{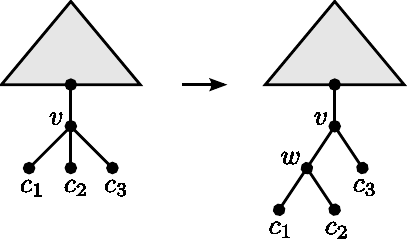}
\caption{Resolving a multifurcating node $v$ by re-attaching $c_1$ and $c_2$ to a new inserted node $w$.} 
\label{fig-resNode}
\end{figure}

Now, let $\cX'$ be a subset of the taxa set corresponding to a rooted phylogenetic $\cX$-tree $T$. Then, the \emph{lowest common ancestor} of $T$ corresponding to $\cX'$, shortly denoted by $\scLCA_T(\cX')$, is the farthest node $v$ from the root in $T$ with $\cX'\subseteq\cL(v)$. More precisely, $v$ is chosen such that $\cX'\subseteq\cL(v)$ holds and there does not exist a node $w$ with $\cX'\subseteq\cL(w)$ and $\cL(w)\subset\cL(v)$.\\

\textbf{Phylogenetic networks.} A \emph{rooted phylogenetic network} $N$ on $\cX$ is a rooted connected digraph whose edges are directed from the root to the leaves as defined in the following. There is exactly one node of in-degree $0$, namely the \emph{root}, and no nodes of both in- and out-degree $1$. The set of nodes of out-degree $0$ is called the \emph{leaf set of $N$} and is labeled one-to-one by the \emph{taxa set} $\cX$, also denoted by $\cL(N)$. In contrast to a phylogenetic tree, such a network may contain undirected but not any directed cycles. Consequently, $N$ can contain nodes of in-degree larger than or equal to $2$, which are called \emph{reticulation nodes} or \emph{hybridization nodes}. Moreover, each edge that is directed into such a reticulation node is called \emph{reticulation edge} \emph{hybridization edge}.\\

\textbf{Hybridization networks.} A hybridization network $N$ for a set of rooted nonbinary phylogenetic $\cX$-trees $\cT$ is a rooted phylogenetic network on $\cX$ \emph{displaying} a refinement $T_i'$ of each tree $T_i$ in $\cT$. More precisely, this means that for each tree $T_i$ in $\cT$ there exists a set $E_i'\subseteq E(N)$ of reticulation edges \emph{referring} to its refinement $T_i'$. This means, in particular, that we can obtain the tree $T_i'$ from $N$ by first deleting all reticulation edges that are not contained in $E_i'$ and then suppress all nodes of both in- and out-degree $1$. In this context, a reticulation edge (or hybridization edge) is an edge that is directed into a node with in-degree larger than or equal to $2$, which is denoted as reticulation node (or hybridization node). 

Given a hybridization network for a set of rooted nonbinary phylogenetic $\cX$-trees $\cT$, the \emph{reticulation number $r(N)$} is defined by 
\begin{equation}
r(N)=\sum_{v\in V:\delta^-(v)>0}(\delta^-(v)-1)=|E|-|V|+1,
\end{equation} 
where $V$ refers to the set of nodes of $N$ and $\delta^-(v)$ denotes the in-degree of a node $v$ in $V$. Moreover, based on the definition of the reticulation number, the (minimum or exact) hybridization number $h(\cT)$ for $\cT$ is defined by 
\begin{equation}
h(\cT)=\mbox{min}\{r(N):N\mbox{ displays a refinement of each }T_i\in\cT\}.
\end{equation}
A hybridization network displaying a set of rooted nonbinary phylogenetic $\cX$-trees $\cT$ with minimum hybridization number $h(\cT)$ is called a \emph{minimum hybridization network}. Notice that even in the simplest case, if $\cT$ consists only of two rooted binary phylogenetic $\cX$-trees, the problem of computing the hybridization number is known to be \emph{NP-hard} but fixed-parameter tractable \cite{Bordewich2007,Bordewich2007b}, which means that the problem is exponential in some parameter related to the problem itself, namely the hybridization number of $\cT$, but only at most polynomial in its input size, which is, in this context, the number of nodes and edges in $\cT$.\\

\textbf{Forests.} Let $T$ be a rooted nonbinary phylogenetic $\cX$-tree $T$. Then, we call any set of rooted nonbinary phylogenetic trees $\cF=\{F_1,\dots,F_k\}$ with $\cL(\cF)=\cX$ a \emph{forest on $\cX$}, if we have for each pair of trees $F_i$ and $F_j$ that $\cL(F_i)\cap\cL(F_j)=\emptyset$. Moreover, we say that $\cF$ is a \emph{forest for $T$}, if additionally for each component $F$ in $\cF$ the tree $F$ is a refinement of $T|_{\cL(F)}$. Lastly, given two forests $\cF$ and $\hat\cF$ for a rooted phylogenetic $\cX$-tree $T$, we say that $\hat\cF$ is a binary resolution of $\cF$, if for each component $\hat F$ in $\hat\cF$ there exists a component $F$ in $\cF$ such that $\hat F$ is a binary resolution of $F$. Lastly, let $\cF$ be a forest on a taxa set $\cX$, then by $\overline \cF$ we refer to the forest that is obtained from $\cF$ by deleting each element only consisting of an isolated node.\\\\

\textbf{Nonbinary agreement forests.} Given two rooted nonbinary phylogenetic $\cX$-trees $T_1$ and $T_2$. For technical purpose, we consider the root of both trees $T_1$ and $T_2$ as being a node that has been marked by new taxon $\rho\not\in \cX$. More precisely, let $r_i$ be the root of the tree $T_i$ with $i\in\{1,2\}$. Then, we first create a new node $v_i$ as well as a new leaf $\ell_i$ labeled by a new taxon $\rho\not\in\cX$ and then attach these nodes to $r_i$ by inserting the two edges $(v_i,r_i)$ and $(v_i,\ell_i)$ such that $v_i$ is the new root of the resulting tree. Now, an \emph{agreement forest} for two so marked trees $T_1$ and $T_2$ is a forest $\cF=\{F_{\rho},F_1,\dots,F_k\}$ on $\cX \cup \{\rho\}$ satisfying the following three conditions.

\begin{itemize}
\item[(1)] Each component $F_i$ with taxa set $\cX_i$ equals a refinement of $T_1|_{\cX_i}$ and $T_2|_{\cX_i}$, respectively. 
\item[(2)] There is exactly one component, denoted as $F_{\rho}$, containing $\rho$.
\item[(3)] Let $\cX_{\rho},\cX_1,\dots,\cX_k$ be the taxa sets corresponding to $F_{\rho},F_1,\dots,F_k$. All trees in $\{T_1(\cX_i)|i\in\{\rho,1,\dots,k\}\}$ and $\{T_2(\cX_i)|i\in\{\rho,1,\dots,k\}\}$ are edge disjoint subtrees of $T_1$ and $T_2$, respectively.
\end{itemize}

Let $\cF$ be an agreement forests for two rooted phylogenetic $\cX$-trees $T_1$ and $T_2$ and let $E$ be a set only consisting of edges in $\cF$. Then, by $\cF\ominus E$ we refer to the forest $\cF'$ that is obtained from $\cF$ by contracting each edge in $E$. Based on this definition, we say that $\cF$ is \emph{relevant}, if there does not exist an edge $e$ in $\cF$ such that $\cF\ominus \{e\}$ is still an agreement forest for $T_1$ and $T_2$. 

Moreover, a \emph{maximum agreement forest} for two rooted nonbinary phylogenetic $\cX$-trees $T_1$ and $T_2$ is an agreement forest of minimal size, which implies that there does not exist a smaller set of components fulfilling the properties of an agreement forest for $T_1$ and $T_2$ listed above. Additionally, we call an agreement forest $\cF$ for $T_1$ and $T_2$ \emph{acyclic}, if its underlying \emph{ancestor-descendant graph} $AG(T_1,T_2,\cF)$ does not contain any directed cycles (cf.~Fig.~\ref{fig-AG}). This directed graph contains one node corresponding to precisely one component of $\cF$ and an edge $(F_i,F_j)$ for a pair of its nodes $F_i$ and $F_j$, with $i\neq j$, if,
\begin{itemize}
\item[(i)] regarding $T_1$, there is a path leading from the root of $T_1(\cX_i)$ to the root of $T_1(\cX_j)$ containing at least one edge of $T_1(\cX_i)$, 
\item[(ii)] or, regarding $T_2$, there is a path leading from the root of $T_2(\cX_i)$ to the root of $T_2(\cX_j)$ containing at least one edge of $T_2(\cX_i)$.
\end{itemize}
In this context, $\cX_i\subseteq \cX$ and $\cX_j\subseteq \cX$ refers to the set of taxa that are contained in $F_i$ and $F_j$, respectively. Again, we call an acyclic agreement forest consisting of a minimum number of components a \emph{maximum acyclic agreement forest}. Notice that, in the binary case, for a maximum acyclic agreement forest containing $k$ components there exists a hybridization network whose reticulation number is $k-1$ \cite{Baroni2005}. This means, in particular, if a maximum acyclic agreement forest for two binary phylogenetic $\cX$-trees contains only one component, both trees are congruent.\\

\textbf{Acyclic orderings.} Now, if $\cF$ is acyclic and, thus, $\scAG(T_1,T_2,\cF)$ does not contain any directed cycles, one can compute an acyclic ordering $\Pi$ as already described in the work of Baroni \textit{et al.} \cite{Baroni2006}. First, select the node $v_{\rho}$ corresponding to $F_{\rho}$ of in-degree $0$ and remove $v_{\rho}$ together with all its incident edges. Next, again choose a node $v_1$ with in-degree $0$ and remove $v_1$. By continuing this way, until all nodes have been removed, one receives the ordering $\Pi=(v_{\rho},v_1,\dots,v_k)$. Notice that, since the graph does not contain any cycles, such an ordering always has to exist. In the following, we call the ordering of components corresponding to each node in $\Pi$, denoted by $(F_{\rho}, F_1,\dots,F_k)$, an acyclic ordering of $\cF$. As during each of those steps there can occur multiple nodes of in-degree $0$, especially, if $\cF$ contains multiple components consisting only of one taxon, such an acyclic ordering is in general not unique.\\

\begin{figure}[bt]
\centering
\includegraphics[scale = 1]{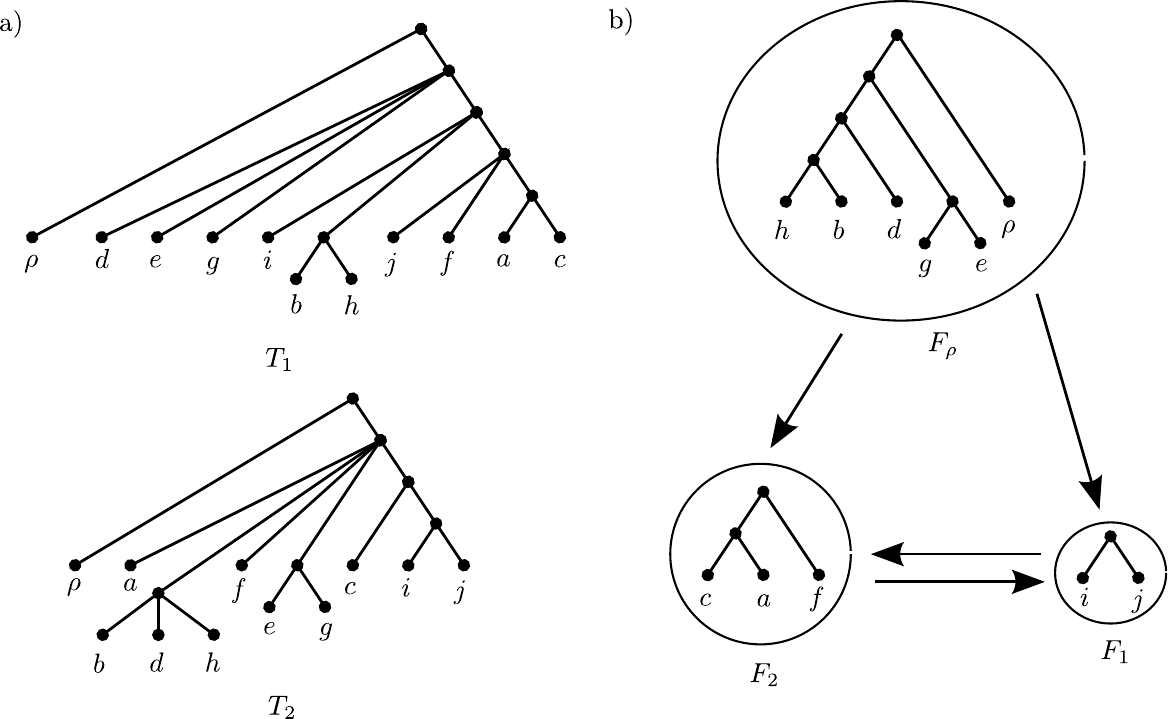}
\caption{(a) Two rooted nonbinary phylogenetic $\cX$-trees $T_1$ and $T_2$. (b) The ancestor-descendant graph $AG(T_1,T_2,\cF)$ with $\cF=\{F_{\rho},F_1,F_2\}$. Notice that the component corresponding to each node of the graph is drawn inside.} 
\label{fig-AG}
\end{figure}

\textbf{Trees reflecting agreement forests}. Let $\cF=\{F_{\rho},F_1,F_2,\dots,F_{k-1}\}$ be an agreement forest for two rooted (nonbinary) phylogenetic $\cX$-trees $T_1$ and $T_2$, then, a tree $T_i(\cF)$ for $i\in\{1,2\}$ corresponds to the tree $T_i$ reflecting each component in $\cF$ (cf.~Fig.~\ref{31-fig-TRef}). Generally speaking, in such a tree some of its nodes are resolved such that the definition of an agreement forest can be applied in terms of the resulting tree and $\cF$. Technically speaking, such a tree $\hat T_i=T_i(\cF)$ satisfies the following two properties.

\begin{itemize}
\item[(1)] Each component $F_j$ in $\cF$ refers to a restricted subtree of $\hat T_i|_{\cL(F_j)}$. 
\item[(2)] All trees in $\{\hat T_i(\cL(F_j))|j\in\{\rho,1,\dots,k-1\}\}$ are node disjoint subtrees in $\hat T_i$.
\end{itemize}

We can construct such a tree $T_i(\cF)$ by reattaching the components of $\cF$ back together in a specific way as follows. Let $\Pi_{\cF}=(F_0,F_1,\dots,F_k)$, with $F_0=F_{\rho}$, be an acyclic ordering that can be obtained from $AG(T_i,T_i,\cF)$ as discussed above. Notice that, as this graph is based only on one of both trees, this graph cannot contain any directed cycles and, thus, $\Pi_{\cF}$ always exists. Now, each of those components in $\Pi_{\cF}$, beginning with $F_1$, is added sequentially to a growing tree $T^*$ (initialized with $F_0$) as follows.

\begin{itemize}
\item[(i)] Let $\cX_{<m}$ be the union of each taxa set corresponding to each component $F_l$ in $\Pi_{\cF}$ with $l<m$, i.e., $\cX_{<m}=\bigcup^{m-1}_l\cL(F_l)$, and let $\cX_m$ be the taxa set corresponding to $F_m$. Moreover, let $P_m=(v_0^m,v_1^m,\dots,v_n^m)$ be those nodes lying on the path connecting the node $v_0^m=\scLCA_{T_i}(\cX_m)$ and the root $v_n^m$ of $T_i$ such that $v_q^m$, with $q\in\{1,\dots,n\}$, is the parent of $v_{q-1}^m$. Then, $$v'=\min_q\{v_q^m:v_q^m \in P \wedge \cL(v_q^m)\cap\cX_{<m}\ne\emptyset\}.$$
\item[(ii)] Let $\cX'$ be the set of taxa corresponding to the leaf set of $T_i(v')$ restricted to $\cX_{<m}$. Notice that, due to the definition of $v'$, this set $\cX'$ is not empty. Moreover, based on $\cX'$, let $v^*$ be the node in $T^*$ corresponding to $\scLCA_{T^*}(\cX')$.
\item[(iii)] Now, given $v^*$, the component $F_m$ is added to $T^*$ by connecting its root node $\rho_m$ to the in-edge of $v^*$. More precisely, first a new node $x$ is inserted into the in-edge of $v^*$ and then $\rho_m$ is connected to $x$ by inserting a new edge $(x,\rho_m)$.
\end{itemize}

\begin{figure}[bt]
\centering
\includegraphics[scale =1.3]{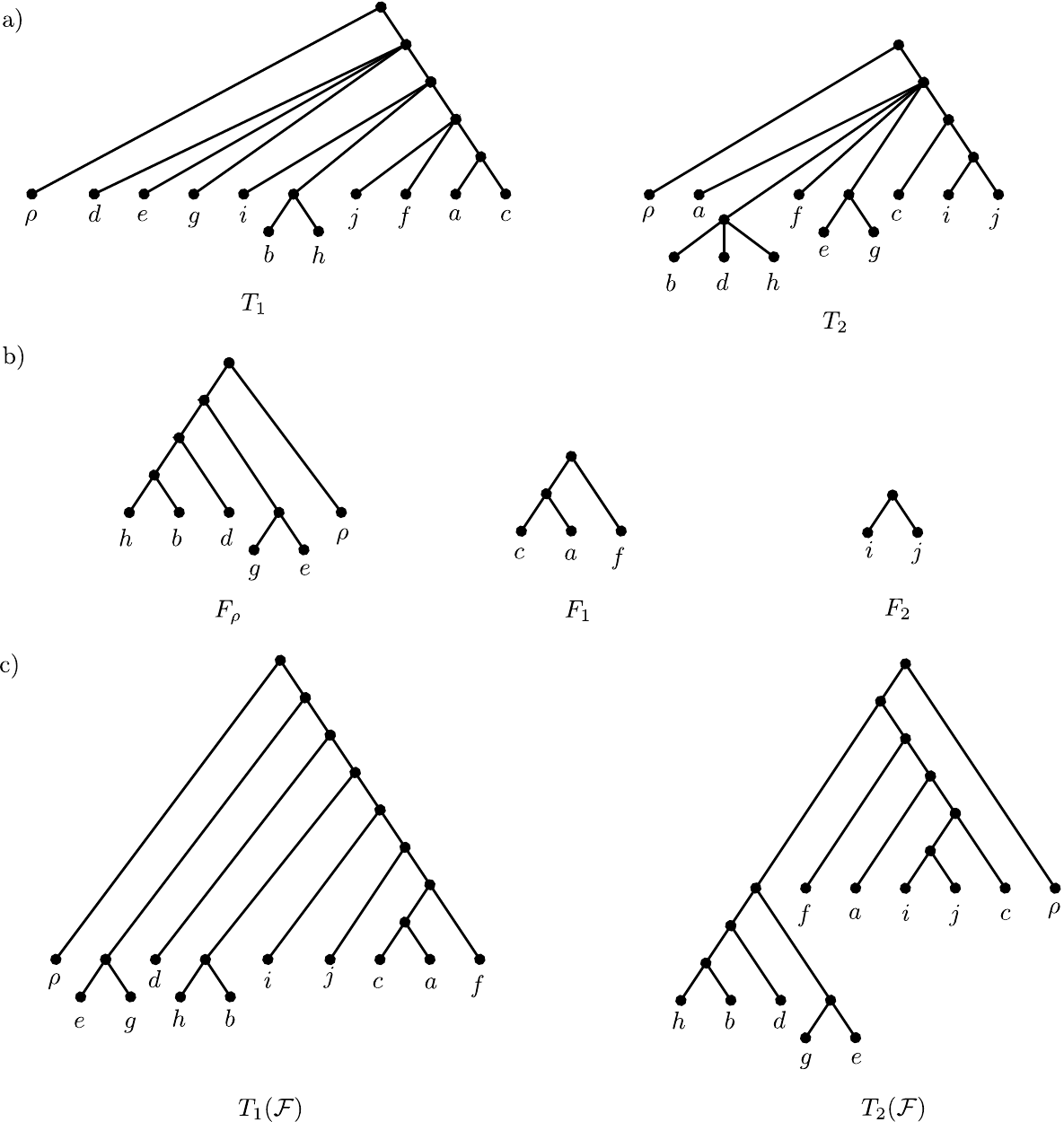}
\caption[Trees reflecting an agreement forest]{(a) Two rooted nonbinary phylogenetic $\cX$-trees $T_1$ and $T_2$. (b) An agreement forest $\cF$ for $T_1$ and $T_2$. (c) Two trees $T_1(\cF)$ and $T_2(\cF)$ both reflecting $\cF$ and being calculated in terms of the acyclic ordering $(F_{\rho},F_2,F_1)$ and $(F_{\rho},F_1,F_2)$, respectively.} 
\label{31-fig-TRef}
\end{figure}
Notice that, since there can exist multiple acyclic orderings for an acyclic agreement forest $\cF$, the tree $T_i(\cF)$ is in general not unique.

\clearpage
\section{The algorithm \textsc{allMulMAFs}}

In this section, we show how to modify the algorithm \textsc{allMAAFs} \cite{Scornavacca2012} such that the output consists of all relevant maximum agreement forests for two rooted \emph{nonbinary} phylogenetic $\cX$-trees. Similar to the algorithm \textsc{allMAAFs}, this algorithm is again based on processing common and contradicting cherries. In order to cope with nonbinary nodes, however, now for an internal node one has to consider more than one cherry and, before cutting a particular set of edges, one first has to resolve some nonbinary nodes. Furthermore, in respect to the definition of relevant maximum agreement forests, when expanding contracted nodes one has to take care on not generating any contractible edges.

In the following, we will first introduce some further notations necessary for describing the algorithm \textsc{allMulMAFs}. Moreover, we give a detailed formal proof establishing the correctness of the algorithm, which means, in particular, that we will show that the algorithm calculates all relevant maximum agreement forests for two rooted \emph{nonbinary} phylogenetic $\cX$-trees. Finally, we end this section by discussing its theoretical worst-case runtime.

\subsection{Notations}
\label{sec-def}

Before going into details, we have to give some further notations that are crucial for the following description of the algorithm.\\

\textbf{Removing leaves.} Given a rooted (nonbinary) phylogenetic $\cX$-tree $R$, a \emph{leaf $\ell$ is removed} by first deleting its in-edge and then by suppressing its parent $p$, if, after $\ell$ has been deleted, $p$ has out-degree $1$.\\

\textbf{Cherries.} Let $R$ be a rooted (nonbinary) phylogenetic $\cX$-tree and let $\ell_a$ and $\ell_c$ be two of its leaves that are adjacent to the same parent node $p$ and labeled by taxon $a$ and $c$, respectively. Then, we call the set consisting of the two taxa $\{a,c\}$ a \emph{cherry of $R$}, if the children of $p$ are all leaves. Now, let $\{a,c\}$ be a cherry of $R$ and let $\cF$ be a forest on a taxa set $\cX'$ such that $\overline\cF$ is a forest for $R$. Then, we say $\{a,c\}$ is a \emph{contradicting cherry of $R$ and $\cF$}, if $\cF$ does not contain a tree containing $\{a,c\}$. Otherwise, if such a tree exists in $\cF$, the cherry $\{a,c\}$ is called a \emph{common cherry of $R$ and $\cF$}.\\

\textbf{Contracting cherries.} Given a rooted (nonbinary) phylogenetic $\cX$-tree $R$, a cherry $\{a,c\}$ of $R$ can be contracted in two different ways (cf.~Fig.~\ref{32-fig-conCherry}). Either, if the two leaves $\ell_a$ and $\ell_c$ labeled by $a$ and $c$, respectively, are the only children of its parent node $p$, first the in-edge of both nodes is deleted and then the label of $p$ is set to $\{a,c\}$. Otherwise, if the two leaves $\ell_a$ and $\ell_c$ contain further siblings and, thus, its parent $p$ has an out-degree larger than $2$, just the in-edge of $\ell_a$ is deleted and the label of $\ell_c$ is replaced by $\{a,c\}$.\\

\begin{figure}[bt]
\centering
\includegraphics[scale = 1.3]{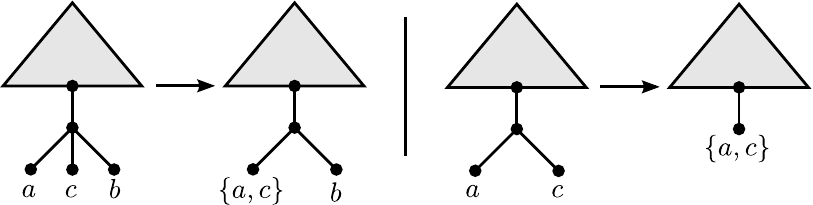}
\caption[Two different ways of contracting a cherry regarding the algorithm \textsc{allMulMAFs}]{Two different ways of contracting a cherry $\{a,c\}$ depending on its number of siblings which is one on the left hand side and zero on the right hand side and .} 
\label{32-fig-conCherry}
\end{figure}

\textbf{Cutting edges.} Given a rooted (nonbinary) phylogenetic $\cX$-tree $F$, the in-edge $e_v$ of a node $v$ is cut as follows (cf.~Fig.~\ref{32-fig-cutEdge}). First $e_v$ is deleted and then its parent node $p$ is suppressed, if, after the deletion of $e_v$, $p$ has out-degree $1$. Note that by cutting an edge in $F$, two rooted (nonbinary) phylogenetic trees with taxa set $\cX'$ and $\cX''$ are generated with $\cX'\cup\cX''=\cX$ and $\cX'\cap\cX''=\emptyset$.

Moreover, let $\cF$ be a set of rooted (nonbinary) phylogenetic $\cX$-trees and let $E$ be a set of edges in which each edge $e$ is part of a tree in $\cF$. Then, in order to ease reading, we write $\cF-E$ to denote the cutting of each edge $e\in E$ within its corresponding tree in $\cF$.\\

\begin{figure}[bt]
\centering
\includegraphics[scale = 1.5]{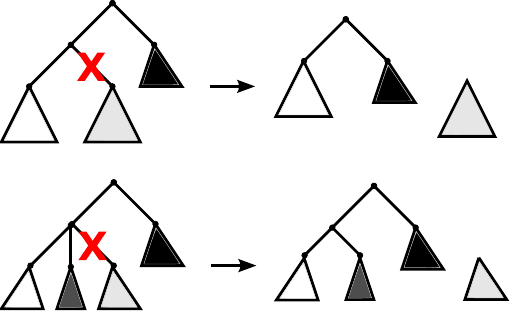}
\caption[Two different ways of cutting an edge regarding the algorithm \textsc{allMulMAFs}]{Two different ways of cutting an edge depending on the out-degree of its source node which is $2$ at the top and $3$ at the bottom.} 
\label{32-fig-cutEdge}
\end{figure}

\textbf{Pendant edges.} Given a rooted (nonbinary) phylogenetic $\cX$-tree $F$ and two leaves $\ell_a$ and $\ell_c$ labeled by taxon $a$ and $c$, respectively, that are not adjacent to the same node. Then, the set of \emph{pendant edges $E_B$ for $a$ and $c$} is based on a refinement of $F$, shortly denoted by $F[a\sim c]$, which is obtained from $F$ as follows. Let $(\ell_a,v_1,v_2,\dots,v_n,\ell_c)$ be the path connecting the two leafs $\ell_a$ and $\ell_c$ in $F$. Then, each node $v\in\{v_1,v_2,\dots,v_n\}$ with $\delta^+(v)>2$ is turned into a node of out-degree $2$ as follows. First a new node $w$ is created that is attached to $v$ by inserting a new edge $(v,w)$ and then each out-going edge $(v,x)$ with $x\not\in\{w,\ell_a,v_1,v_2,\dots,v_n,\ell_c\}$ is deleted followed by reattaching the node $x$ to $w$ by inserting a new edge $(w,x)$. Now, regarding $F[a\sim c]$, let $(\ell_a',v_1',v_2',\dots,v_n',\ell_c')$ be the path connecting the two nodes $\ell_a'$ and $\ell_c'$ labeled by $a$ and $c$, respectively. Then, $E_B$ consists of each edge $(u',v')$ with $u'\in\{v_1',v_2',\dots,v_n'\}$ and $v'\not\in\{\ell_a',v_1',v_2',\dots,v_n',\ell_c'\}$ (cf.~Fig.~\ref{32-fig-case_3b}).

\begin{figure}[bt]
\centering
\includegraphics[scale = 1.3]{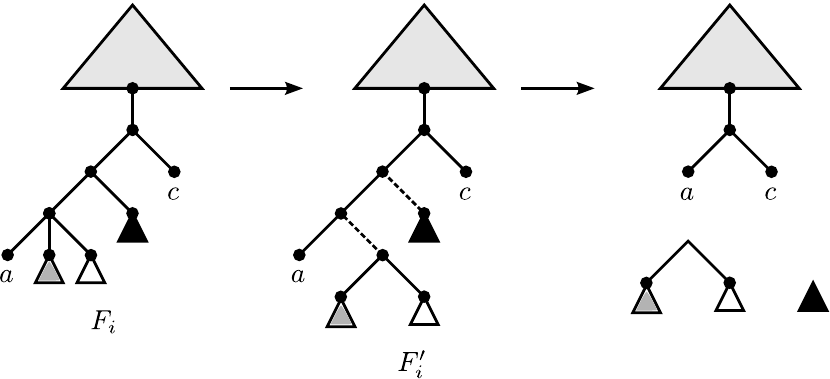}
\caption[An illustration of Case 3b following the description the algorithm \textsc{allMulMAFs}]{An illustration of Case 3b. First $F_i$ is refined into $F_i'=F_i[a\sim c]$ and then each pendant subtree of the path connecting both nodes labeled by $a$ and $c$, respectively, are cut. Note that each dashed edge of $F_i'$ is part of the pendant edge set $E_B$ for $a$ and $c$.}
\label{32-fig-case_3b}
\end{figure}

Moreover, given a forest $\cF$ on $\cX$ containing a tree $F$ with two leaves $\ell_a$ and $\ell_c$ labeled by taxon $a$ and $c$, respectively, that are not adjacent to the same node, then, by $\cF[a\sim c]$ we refer to $\cF$ in which $F$ is replaced by $F[a\sim c]$.\\ 

\textbf{Labeled nodes.} Let $R$ be a rooted (nonbinary) phylogenetic $\cX$-tree, then, by $\ell(R)$ we denote the number of its labeled nodes. Moreover, let $\cF$ be a forest on $\cX'$ such that $\overline\cF$ is a forest for $R$. Then, $\ell(\overline\cF)$ refers to the number of labeled nodes that are contained in each tree of $\overline\cF$. Additionally, we write $\ell(R)\equiv\ell(\overline\cF)$, if $\ell(R)$ equals $\ell(\overline\cF)$ and if for each labeled node $v_R$ in $R$ there exists a labeled node $v_F$ in $\overline\cF$ such that $\cL(v_R)=\cL(v_F)$. Moreover, if $\ell(R)\equiv\ell(\overline\cF)$ holds, we say that a leaf $\ell$ in $R$ \emph{refers} to a leaf $\ell'$ in $\cF$, if $\cL(\ell)$ equals $\cL(\ell')$.

\subsection{The algorithm}
\label{32-sec-algAF}

In this section, we give a description of the algorithm \textsc{allMulMAFs} calculating a particular set of nonbinary maximum agreement forests for two rooted phylogenetic $\cX$-trees $T_1$ and $T_2$. More specifically, as shown by an upcoming formal proof, this set consists of all relevant agreement forests for both trees. Before that, however, we want to give a remark emphasizing that the algorithm is based on a previous published algorithm that solves a similar problem dealing with rooted binary phylogenetic $\cX$-trees.

\begin{remark}
Our algorithm is an extension of the algorithm \textsc{allMAAFs} \cite{Scornavacca2012} computing all maximum \emph{acyclic} agreement forests for two rooted binary phylogenetic $\cX$-trees $T_1$ and $T_2$. Notice that the work of Scornavacca~\textit{et al.} \cite{Scornavacca2012} also contains a formal proof showing the correctness of the presented algorithm. The algorithm \textsc{allMulMAFs} presented here has a similar flavor and, thus, our notation basically follows the notation that has already been used for the description of the algorithm \textsc{allMAAFs}.
\end{remark}

Broadly speaking, given two rooted (nonbinary) phylogenetic $\cX$-trees $T_1$ and $T_2$ as well as a parameter $k\in\mathbb{N}$, our algorithm acts as follows. Based on the topology of the first tree $T_1$, the second tree $T_2$ is cut into several components until either the number of those components exceeds $k$ or the set of components fulfills each property of an agreement forest for $T_1$ and $T_2$. To ensure that there does not exist an agreement forest consisting of less than $k$ components, the following steps can be simply conducted by step-wise increasing parameter $k$ beginning with $k=0$. Thus, as far as our algorithm reports an agreement forest for $T_1$ and $T_2$ of size $k$, this agreement forest must be of minimum size and, hence, must be a maximum agreement forest for both input trees. In order to speed up computation, one can either set $k$ to a lower bound calculated by particular approximation algorithms as, for instance, given in van Iersel \textit{et al.} \cite{Iersel2014}, or directly to the hybridization number calculated by applying less complex algorithms, e.g., the algorithm \textsc{TerminusEst} \cite{Piovesan2014}. 

The algorithm \textsc{allMulMAFs} takes as input two rooted (nonbinary) phylogenetic $\cX$-trees $T_1$ and $T_2$ as well as a parameter $k\in\mathbb{N}$. If $k<h(T_1,T_2)$ holds, an empty set is returned. Otherwise, as we will show later in Section~\ref{32-sec-cor1}, if $k$ is larger than or equal to $h(T_1,T_2)$, the output $\boldsymbol\cF$ of \textsc{allMulMAFs} contains all relevant maximum agreement forests for $T_1$ and $T_2$. Throughout the algorithm three specific tree operations are performed on both input trees. Either a leaf is removed, subtrees are cut, or a common cherry is contracted. 

The algorithm \textsc{allMulMAFs} contains a recursive subroutine, in which the input of each recursion consists of a rooted (nonbinary) phylogenetic $\cX$-tree $R$, a forest $\cF$ on some taxa set $\cX'$ with $\overline\cF$ being a forest for $R$, a parameter $k\in\mathbb{N}$, and a map $M$. This map $M$ is necessary for undoing each cherry reduction that has been applied to each component of the resulting forest. For that purpose, $M$ maps a set of taxa $\tilde\cX$ to a triplet $(\cX_1,\cX_2,B)$ with $\cX_1\cup\cX_2=\tilde\cX$, $\cX_1\cap\cX_2=\emptyset$, and $B\in\{\top,\bot\}$, where $B$ denotes the way of how a cherry is expanded (as discussed below). In order to ease reading, by $M[\tilde\cX]\leftarrow B$ we refer to the operation on $M$ mapping $\tilde\cX$ to $B$. This means, in particular, if $M$ already contains an element with taxa set $\tilde\cX$ this element is replaced. For instance, if $M[\tilde\cX]=(\cX_1,\cX_2,\top)$, by $M[\tilde\cX]\leftarrow \bot$ the taxa set $\tilde\cX$ is remapped to $(\cX_1,\cX_2,\bot)$ so that after this operation $M[\tilde\cX]=(\cX_1,\cX_2,\bot)$. \\

\textbf{Expanding agreement forests.} The expansion of an agreement forest $\cF$ is done by applying the following steps to $\cF$. Choose a leaf $v$ corresponding to a component $F_i$ in $\cF$ whose taxon $\cL(v)$ is contained in $M$. Let $(\cX_1,\cX_2,B)$ be the triplet referring to $M[\cL(v)]$, then, depending on $B\in\{\top,\bot\}$, one of the following two operations is performed as illustrated in Figure~\ref{32-fig-expCher}.
\begin{itemize}
\item If $B$ equals $\bot$, replace $v$ in $F_i$ by first creating two new nodes $w_1$ and $w_2$ and then by labeling $w_1$ and $w_2$ by $\cX_1$ and $\cX_2$, respectively. Finally, both nodes $w_1$ and $w_2$ are attached to $v$ by inserting a new edge $(v,w_1)$ and $(v,w_2)$ 
\item Otherwise, if $B$ equals $\top$, replace $v$ in $F_i$ by first creating a new node $w$ and then by labeling $w$ and $v$ by $\cX_1$ and $\cX_2$, respectively. Finally, $w$ is attached to the parent $p$ of $v$ by inserting a new edge $(p,w)$.
\end{itemize}
These steps are repeated in an exhaustive way until each taxon of each leaf in $\cF$ is not contained in $M$. As a result, the expanded forest corresponds to a nonbinary agreement forest for the two input trees $T_1$ and $T_2$. Notice that in the following, by saying a cherry is expanded \emph{in respect of~$\bot$} or \emph{in respect of~$\top$}, we refer to one of both ways as described above.\\

\begin{figure}[bt]
\centering
\includegraphics[scale = 1.3]{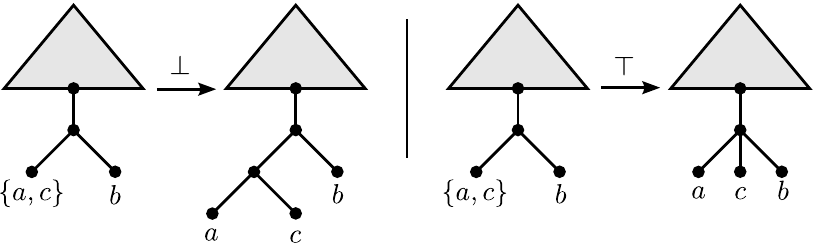}
\caption[Two different ways of expanding a cherry regarding the algorithm \textsc{allMulMAFs}]{Expanding a cherry $\{a,c\}$ in respect of $\bot$ (left) and in respect of $\top$ (right).} 
\label{32-fig-expCher}
\end{figure}

In the following, a description of the recursive algorithm \textsc{allMulMAFs} is given. Here we assume that, at the beginning, $R$ is initialized by $T_1$, $\cF$ by $\{T_2\}$ and $M$ by $\emptyset$, with $T_1$ and $T_2$ being two rooted (nonbinary) phylogenetic $\cX$-trees. Then, during each recursive call, $\overline\cF$ is a forest for $R$ with $\ell(R)\equiv\ell(\overline\cF)$ and, depending on the size of $\cF$ (cf.~Case~1a--c) and the choice of the next cherry that is selected from $R$ (cf.~Case~2~and~3), the following steps are performed.\\

\textbf{Case 1a.} If $\cF$ contains more than $k$ components, the computational path is aborted immediately and the empty set is returned.\\

\textbf{Case 1b.} If $R$ only consists of a single leaf, each $F_i$ in $\cF$ is expanded as prescribed in~$M$, and, finally, returned.\\

\textbf{Case 1c.} If there exists a specific leaf $\ell$ in $R$ that refers to an isolated node in $\cF$, this leaf $\ell$ is removed from $R$ resulting in $R'$. Next, the algorithm branches into a new path by recursively calling the algorithm with $R'$, $\cF$, $k$, and $M'$ corresponding to $M$ where $\cL(\ell)$ is re-mapped to $M[\cL(\ell)]\leftarrow\bot$. 

Otherwise, if such a leaf $\ell$ does not exist continue with Case~2.\\

\textbf{Case 2.} If there exists a common cherry $\{a,c\}$ of $R$ and $\cF$, the cherry $\{a,c\}$ is contracted in $R$ and $\cF$ resulting in $R'$ and $\cF'$. Second, the algorithm branches into a new path by recursively calling the algorithm with $R'$, $\cF'$, $k$, and $M'$, where $M'$ corresponds to $M$ that has been updated as follows. If both parents of $\{a,c\}$ in $R$ and $\cF$ have out-degree $\ge 3$, $\cL(a)\cup\cL(c)$ is mapped to $(\cL(a),\cL(c),\top)$, otherwise, $\cL(a)\cup\cL(c)$ is mapped to $(\cL(a),\cL(c),\bot)$. 

Otherwise, if such a common cherry does not exist, continue with Case~3.\\

\textbf{Case 3.} If there does not exist a common cherry of $R$ and $\cF$, a node $v$ in $R$ whose children are all leaves is selected. Now, \emph{for each} cherry $\{a,c\}$ of $v$, depending on the location of the leaves referring to $a$ and $c$ in $\cF$, one of the following two cases is performed.\\

\textbf{Case 3a.} If $a\not\sim_{\cF} c$ holds, and, thus, the leaves referring to $a$ and $c$ in $\cF$ are located in two different components, the algorithm branches into two computational paths by recursively calling the algorithm by $R$, $\cF-\{e_a\}$, $k$, and $M'$ as well as $R$, $\cF-\{e_c\}$, $k$, and $M'$, where $e_a$ and $e_c$ correspond to the in-edge of the leaf of $\cF$ referring to $a$ and $c$, respectively, and $M'$ is obtained from $M$ as follows. Let $p$ be the parent in $\cF$ of the leaf referring to $a$ (resp. $c$). If $p$ has out-degree larger than~$2$, nothing is done. Otherwise, if $p$ has out-degree~$2$, let $\ell$ be the sibling of the leaf labeled by $a$ (resp. $c$). Then, if $\ell$ is a leaf $M$ is updated so that $M'=M[\cL(\ell)]\leftarrow\bot$.\\

\textbf{Case 3b.} If $a\sim_{\cF} c$ holds, and, thus, in $\cF$ both leaves $\ell_a$ and $\ell_c$ referring to $a$ and $c$, respectively, are located in the same component $F_i$, the algorithm branches into the following three computational paths. First, similar to Case~3a, the algorithm is called by $R$, $\cF-\{e_a\}$, $k$, and $M'$ as well as $R$, $\cF-\{e_c\}$ $k$, and $M'$. Second, a third computational path is initiated by calling the algorithm with $R$, $\cF[a\sim c]-E_B$, $k$, and $M''$, where $E_B$ refers to the set of pendant edges in $\cF[a\sim c]$ and $M''$ is obtained from $M$ as follows.

Let $(\ell_a,v_1,\dots,v_n,\ell_c)$ denote the path connecting $\ell_a$ and $\ell_c$ in $\cF$. Then, $M''$ is obtained by updating $M$ as follows. If $v_1$ does not correspond to $\scLCA_{F_i}(\{a,c\})$, $\cL(\ell_a)$ is remapped to $M[\cL(\ell_a)]\leftarrow\bot$. Similarly, if $v_n$ does not correspond to $\scLCA_{F_i}(\{a,c\})$, $\cL(\ell_c)$ is remapped $M[\cL(\ell_c)]\leftarrow\bot$.

An illustration of this case is given in Figure~\ref{32-fig-case_3b}\\

We end the description of the algorithm by noting that the algorithm \textsc{allMulMAFs} always terminates, since during each recursive call either the size of $R$ decreases or the number of components in $\cF$ increases. More precisely, the size of $R$ is decreased by one either by deleting one of its leaves $\ell$ referring to an isolated node in $\cF$ (cf.~Case~1c) or by contracting a common cherry of $R$ and $\cF$ (cf.~Case~2). If $R$ is not decreased, at least one edge in $\cF$ is cut (cf.~Case~3) and, thus, its size increases at least by one. As each computational path of the algorithm stops if $R$ only consists of an isolated node or if $k$ edges have been cut, each recursive call does always make progress towards one of both abort criteria.

\begin{figure}[bt]
\centering
\includegraphics[scale = 1.3]{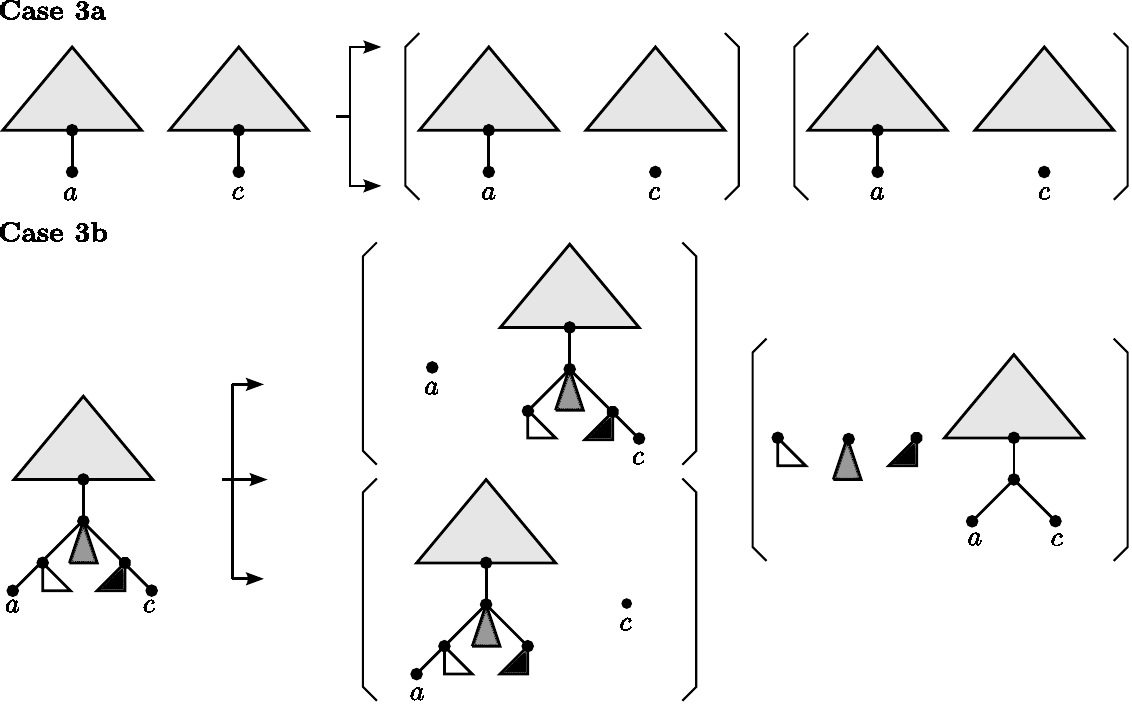}
\caption[An illustration of Case 3a following the description the algorithm \textsc{allMulMAFs}]{An illustration of Case~3a branching into two computational paths and Case 3b branching into three computational paths. Regarding Case~3b, one additional computational path is created in which all pendant subtrees lying on the path connecting the two nodes labeled by $a$ and $c$, respectively, are cut. } 
\label{fig-case_3ab}
\end{figure}

\subsection{Correctness of \textsc{allMulMAFs}}
\label{32-sec-cor1}

In this section, we establish the correctness of the algorithm \textsc{allMulMAFs}. However, before doing so, we want to give an important remark emphasizing the relation of our algorithm \textsc{allMulMAFs} to the previously presented algorithm \scAMa~\cite{Albrecht2015b}, which is a modification of the algorithm \textsc{allMAAFs} \cite{Scornavacca2012} improving the processing of contradicting cherries.

\begin{remark}
Given two binary phylogenetic $\cX$-trees, the algorithm \textsc{allMulMAFs} processes an ordered set of cherries in the same way as the algorithm \scAMa~omitting its acyclic check (henceforth denoted as \scAMSa) testing an agreement forests for acyclicity. This means, in particular, that our algorithm \textsc{allMulMAFs} is simply an extension of the algorithm \scAMSa~that is now able to handle nonbinary trees, but for binary trees still acts in the same way.
\label{32-rem-maafs}
\end{remark} 

As a consequence of Remark~\ref{32-rem-maafs}, the upcoming proof showing the correctness of \textsc{allMulMAFs} refers to the correctness of \scAMa~calculating all maximum acyclic agreement forests for two rooted binary phylogenetic $\cX$-trees \cite[Theorem~2]{Albrecht2015b}. In a first step, however, in order to ease the understanding of our proof, we will introduce a connective element between both algorithms, which is a modified version of our original algorithm --- called \textsc{allMulMAFs*} --- processing types of cherries that are not considered by computational paths corresponding to \textsc{allMulMAFs}.

\subsubsection{The algorithm \textsc{allMulMAFs*}}

Before describing the algorithm, we have to add further definitions that are crucial for what follows.\\

\textbf{Proper leaves.} Given a leaf $\ell$ of a rooted (nonbinary) phylogenetic $\cX$-tree $R$ labeled with taxon $a$ as well as a forest $\cF$ on some taxa set $\cX'$ such that $\overline\cF$ is a forest for $R$, $\ell$ is called a \emph{proper leaf of} $R$ and $\cF$, if the corresponding leaf in $\cF$ labeled by taxon $a$ is a child of some root.\\

\textbf{Pseudo cherries.} Given a rooted (nonbinary) phylogenetic $\cX$-tree $R$ as well as a forest $\cF$ on some taxa set $\cX'$ such that $\overline\cF$ is a forest for $R$, we call a set of two taxa $\{a,c\}$ a \emph{pseudo cherry for $R$ and $\cF$}, if the following two properties hold. First for each child $v$ of $\scLCA_R(\{a,c\})$ its leaf set $\cL(T(v))$ of size $n$ contains at least $n-1$ proper leaves. Second, the path connecting the two leaves in $R$ labeled by $a$ and $c$ contains at least one pendant proper leaf.\\

\textbf{Preparing cherries.} Given a rooted (nonbinary) phylogenetic $\cX$-tree $R$, a forest $\cF$ on some taxa set $\cX'$ such that $\overline\cF$ is a forest for $R$ as well as a cherry $\{a,c\}$, then, \emph{$\{a,c\}$ is prepared} as follows. If $\{a,c\}$ is not a pseudo cherry for $R$ and $\cF$, nothing is done. Else, the following two steps are conducted. First, each pendant proper leaf $\{\ell_1,\dots,\ell_n\}$ in $R$ lying on the path connecting the two leaves labeled by $a$ and $c$ is removed. Second, the two nodes in $\cF$ labeled by $a$ and $c$ are cut. Notice that, after preparing $\{a,c\}$, the node $\scLCA_R(\{a,c\})$ in $R$ is the parent of the two leaves labeled by taxon $a$ and $c$ and each component in $\cF$ referring to $\{\ell_1,\dots,\ell_n\}$ only consists of a single isolated node (cf.~Fig.~\ref{32-fig-pseudo}).\\ 

\begin{figure}[bt]
\centering
\includegraphics[scale = 1.3]{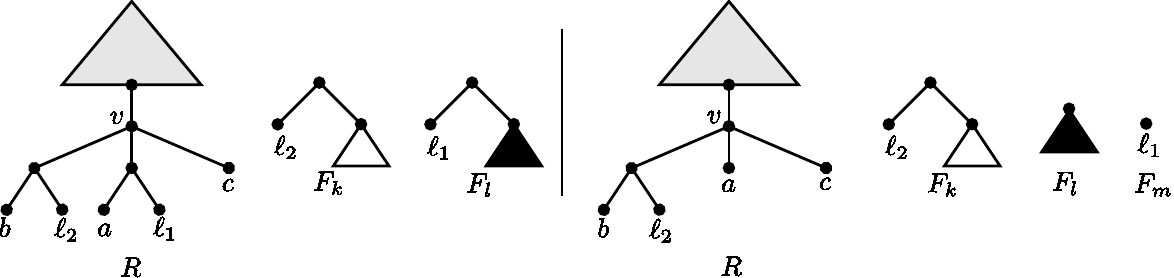}
\caption[An illustration of a pseudo cherry regarding the algorithm \textsc{allMulMAFs*}]{(left) An illustration of a pseudo cherry $\{a,c\}$. (right) The result of preparing the pseudo cherry $\{a,c\}$ given on the left hand side.} 
\label{32-fig-pseudo}
\end{figure}

Now, similar to the original algorithm, the algorithm \textsc{allMulMAFs*} is called by the same four parameters $R$, $\cF$, $M$ and $k$. Given two rooted phylogenetic $\cX$-trees $T_1$ and $T_2$, $R$ is initialized by $T_1$, $\cF$ by $\{T_2\}$ and $M$ by $\emptyset$. Depending on the size of $\cF$ (cf.~Case~1a--c) and the choice of the next cherry that is selected from $R$ (cf.~Case~2), the following steps are performed.\\

\textbf{Case 1a.} If $F$ contains more than $k$ components, the computational path is aborted immediately and an empty set is returned.\\

\textbf{Case 1b.} If $R$ only consists of a single leaf, each $F_i$ in $\cF$ is expanded with the help of $M$, and, finally, returned.\\

\textbf{Case 1c.} If there exists a specific leaf $\ell$ in $R$ that refers to an isolated node in $F$, this leaf $\ell$ is removed from $R$ resulting in $R'$. Next, the algorithm branches into a new path by recursively calling the algorithm with $R'$, $\cF$, $k$, and $M'$ corresponding to $M$ updated by $M[\cL(\ell)]\leftarrow\bot$. 

Otherwise, if such a leaf $\ell$ does not exist continue with Case~2.\\

\textbf{Case 2.} Select a subtree in $R$ in which each pair of taxa either represents a cherry or a pseudo cherry. Now, \emph{for each} (pseudo) cherry $\{a,c\}$, depending on the location of the leaves referring to $a$ and $c$ in $\cF$, one of the following three cases is performed. In a first step, however, the chosen cherry $\{a,c\}$ is prepared as described above. Moreover, $M$ is updated by $M[\cX_i]\leftarrow\bot$, where $\cX_i$ denotes the taxa set of each proper leaf that has been cut during the preceding preparation step.\\

\textbf{Case 2a.} If $\{a,c\}$ is a common cherry, $\{a,c\}$ is processed as described in Case~2 corresponding to the original algorithm \textsc{allMulMAFs}.\\

\textbf{Case 2b.} If $a\not\sim_{\cF} c$ holds, and, thus, the leaves referring to $a$ and $c$ in $\cF$ are located in two different components, $\{a,c\}$ is processed as described in Case~3a corresponding to the original algorithm \textsc{allMulMAFs}.\\

\textbf{Case 2c.} If $a\sim_{\cF} c$ holds, and, thus, the leaves referring to $a$ and $c$ in $\cF$ can be found in the same component $F_i$, $\{a,c\}$ is processed as described in Case~3b corresponding to the original algorithm \textsc{allMulMAFs}.\\

Notice that there are two main differences between the algorithm \textsc{allMulMAFs*} and the original algorithm \textsc{allMulMAFs}. First, a computational path corresponding to \textsc{allMulMAFs*} can additionally process pseudo cherries. Second, if during a recursive call $R$ contains a common cherry $\{a,c\}$ as well as a contradicting cherry $\{a,b\}$, \textsc{allMulMAFs*} additionally branches into a computational path processing $\{a,b\}$. In the following, we will call such a cherry $\{a,b\}$ a \emph{needless cherry} as we will show later that it can be neglected for the computation of maximum agreement forests.

\subsubsection{The algorithm \textsc{ProcessCherries}}

Lastly, we present a simplified version of the algorithm \textsc{allMulMAFs*} --- called \textsc{ProcessCherries} --- mimicking one of its computational by a \emph{cherry list} $\fcL=(\fc_1,\fc_2,\dots,\fc_n)$, in which each of its elements $\fc_i$ denotes a \emph{cherry action}. Such a cherry action $\fc_i=(\{a,c\},\phi_i)$ is a tuple that contains a (pseudo) cherry $\{a,c\}$ of the corresponding rooted phylogenetic $\cX$-tree $R_i$ and the forest $\cF_i$ as well as a variable $\phi_i$ $\in\{\cup_{ac},\nmid_a,\nmid_c,\cap_{ac}\}$ denoting the way $\{a,c\}$ is processed in iteration $i$. More precisely,

\begin{itemize}
\item $\cup_{ac}$ refers to contracting the cherry $\{a,c\}$ following Case~2a of the algorithm \textsc{allMulMAFs*}.
\item $\nmid_a$ and $\nmid_c$ refers to cutting taxon $a$ and $c$, respectively, of the cherry $\{a,c\}$ following Case~2b of the algorithm \textsc{allMulMAFs*}.
\item $\cap_{ac}$ refers to cutting each pendant subtree connecting taxon $a$ and taxon $c$ in $\cF_i$ following Case~2c of the algorithm \textsc{allMulMAFs*}.
\end{itemize}

\begin{algorithm}[bt]
\scriptsize
$M\leftarrow\emptyset$\;
\For{$i = 1,\dots,n$}{
	$(\{a,c\},\phi_i)\leftarrow\wedge_i$\;
	\If{$\{a,c\}$ is a cherry of $R$ or a pseudo cherry for $R$ and $\cF$}{
		\If{$\{a,c\}$ is a pseudo cherry for $R$ and $\cF$}{
			$(R,\cF,M)\leftarrow$ prepare pseudo cherry $\{a,c\}$\;
		}
		\If{$\{a,c\}$ is a common cherry of $R$ and $\cF$ and $\phi_i==\cup_{ac}$}{
			$(R,\cF,M)\leftarrow$ contract cherry $\{a,c\}$\;
		}
		\uElseIf{$\{a,c\}$ is a contradicting cherry of $R$ and $\cF$ and $\phi_i==\nmid_a$}{
			$e_a\leftarrow$ in-edge of node labeled by $a$ in $\cF$\;
			$(R,\cF,M)\leftarrow$ cut edge $e_a$ in $\cF$\;
		}
		\uElseIf{$\{a,c\}$ is a contradicting cherry of $R$ and $\cF$ and $\phi_i==\nmid_c$}{
			$e_c\leftarrow$ in-edge of node labeled by $c$ in $\cF$\;
			$(R,\cF,M)\leftarrow$ cut edge $e_c$ in $\cF$\;
		}
		\ElseIf{$\{a,c\}$ is a contradicting cherry of $R$ and $\cF$}{
			$\cF \leftarrow \cF[a\sim c]$\;
			$E_B\leftarrow$ set of pendant edges for $a$ and $c$ in $\cF$\;
			$(R,\cF,M)\leftarrow$ cut each edge in $E_B$\;
		}
		\Else{
			\Return($\emptyset$)\;
		}
		$(R,\cF,M)\leftarrow$ from $R$ remove each leaf referring to an isolated node in~$\cF$\;
	}
	\Else{
		\Return($\emptyset$)\;
	}
}
$\cF \leftarrow$ expand $\cF$ as prescribed in $M$\;
\Return($\cF$)\;
\caption{$\textsc{ProcessCherries}(R,\cF,(\wedge_1,\dots,\wedge_n))$}
\label{32-alg-proc}
\end{algorithm}

Now, given a cherry list $\fcL$, we say that $\fcL$ is a \emph{cherry list for $T_1$ and $T_2$}, if in each iteration $i$ the cherry $\{a,c\}$ of $\fc_i=(\{a,c\},\phi_i)$ is either contained in $R_i$ or $\{a,c\}$ is a pseudo cherry for $R_i$ and $F_i$. Moreover, after having prepared the cherry $\{a,c\}$, one of the following two conditions has to be satisfied.
\begin{itemize}
\item Either $\{a,c\}$ is a common cherry of $R_i$ and $F_i$ and $\phi_i=\cup_{ac}$,
\item or $\{a,c\}$ is a contradicting cherry of $R_i$ and $F_i$.
\end{itemize}
Notice that this is the case, if and only if calling $\textsc{ProcessCherries}(T_1,\{T_2\},\fcL)$ does not return the empty set (cf.~Alg.~\ref{32-alg-proc}).

\begin{figure}
\centering
\includegraphics[scale =1]{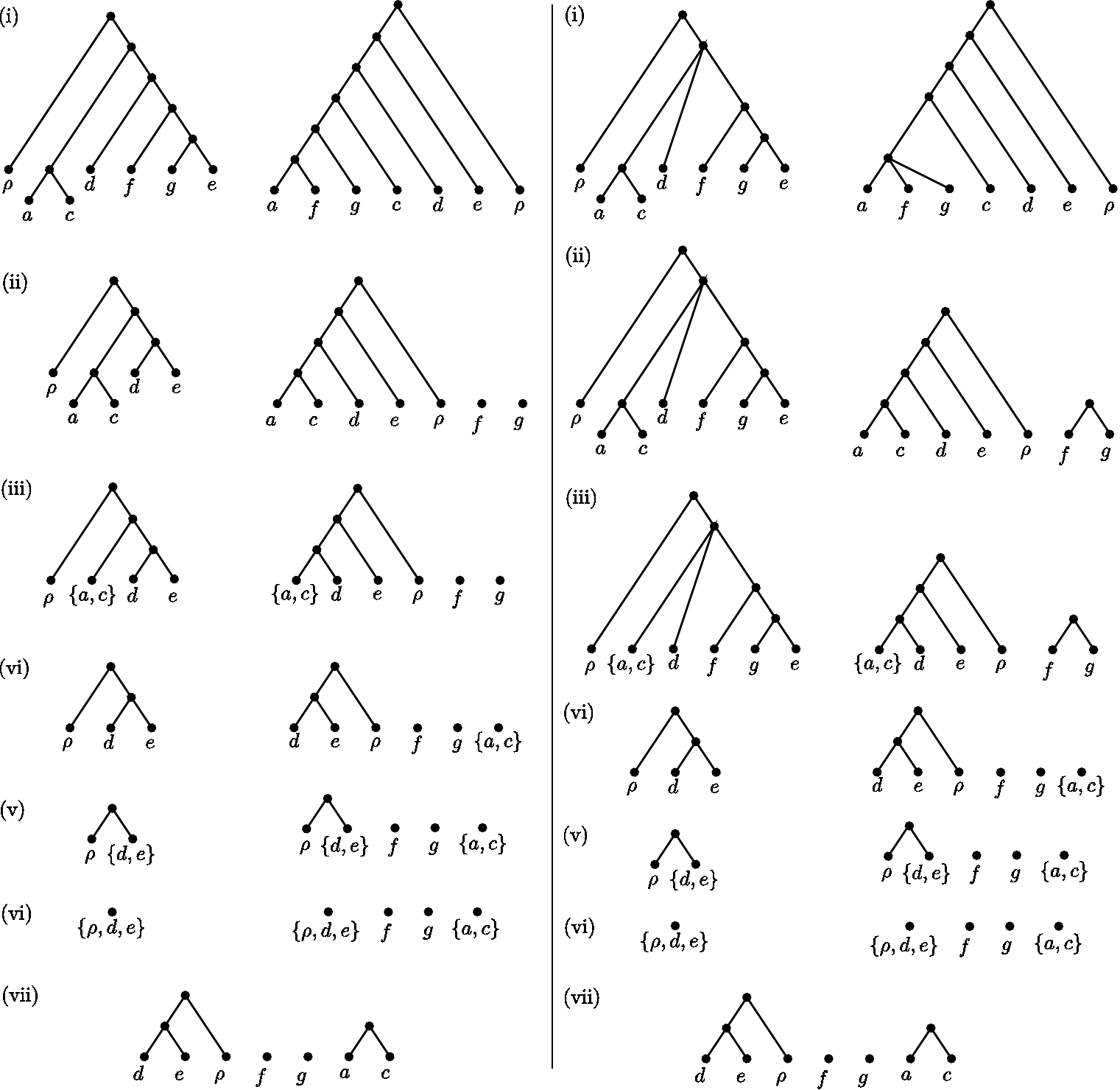}
\caption[Two examples of calling the algorithm \textsc{ProcessCherries}]{Two examples of calling \textsc{ProcessCherries} for two binary and nonbinary trees according to the cherry actions 
$(\{a,c\},\cap_{ac}),(\{a,c\},\cup_{ac}),(\{d,e\},\cap_{de}),(\{d,e\},\cup_{de}),$ and $(\{\rho,\{d,e\}\},\cup_{\rho de})$ conducted in sequential order. Notice that, as the two binary trees are binary resolutions of the two nonbinary trees, the resulting forest on the left hand side is a binary resolution of the resulting forest on the right hand side. Moreover, regarding Step (iii) on the right hand side, notice that the chosen cherry $\{d,e\}$ is a pseudo cherry and, thus, in Step (iv) the two components consisting of the single nodes labeled by $f$ and $g$, respectively, arise from preparing $\{d,e\}$.} 
\label{32-fig-procCher}
\end{figure}

\subsubsection{Proof of Correctness}

In this section, we will establish the correctness of the algorithm \textsc{allMulMAFs} by establishing the following theorem.

\begin{theorem}
Given two rooted (nonbinary) phylogenetic $\cX$-trees $T_1$ and $T_2$, by calling $$\textsc{allMulMAFs}(T_1,\{T_2\},\emptyset,k)$$ all relevant maximum agreement forests for $T_1$ and $T_2$ are calculated, if and only if $k\ge h(T_1,T_2)$.
\label{32-th-cor1}
\end{theorem}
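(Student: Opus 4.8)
The plan is to prove both implications by reducing the behaviour of \textsc{allMulMAFs} on the nonbinary trees $T_1,T_2$ to the behaviour of the binary algorithm \scAMSa{} on suitable binary refinements, whose correctness is available from \cite[Theorem~2]{Albrecht2015b} via Remark~\ref{32-rem-maafs}. The two intermediate constructions \textsc{allMulMAFs*} and \textsc{ProcessCherries} are precisely the bridge: a single computational path of \textsc{allMulMAFs} is encoded by a cherry list $\fcL$, and $\textsc{ProcessCherries}(T_1,\{T_2\},\fcL)$ either replays that path faithfully or returns $\emptyset$. I would first record, as a preparatory lemma, the loop invariant asserted in the algorithm description, namely that at every recursive call $\overline\cF$ is a forest for $R$ with $\ell(R)\equiv\ell(\overline\cF)$, and that the map $M$ records exactly the data needed to reconstruct a genuine agreement forest through the expansion step. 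This establishes \emph{soundness}: every forest returned in Case~1b is an agreement forest for $T_1$ and $T_2$, and the careful choice of $\bot$/$\top$ labels together with the pseudo-cherry preparation guarantees that the $\scLCA$-based reattachment never creates a contractible edge, so the returned forest is \emph{relevant}.

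For \emph{completeness}, which is the substantive content of the ``$\Leftarrow$'' direction, I would fix an arbitrary relevant maximum agreement forest $\cF$ for $T_1,T_2$ and exhibit a surviving computational path producing it. The key step is to choose binary resolutions $\hat T_1,\hat T_2$ of $T_1,T_2$ and a binary resolution $\hat\cF$ of $\cF$ that is a maximum agreement forest for $\hat T_1,\hat T_2$; by the correctness of \scAMSa{} this $\hat\cF$ is generated by some sequence of cherry actions on $\hat T_1,\hat T_2$. I would then \emph{project} that binary cherry list down to a cherry list $\fcL$ for $T_1,T_2$, showing that each binary cherry action either becomes a common or contradicting cherry action of \textsc{allMulMAFs} or is absorbed into a pseudo-cherry preparation step of \textsc{allMulMAFs*}, and that running \textsc{ProcessCherries} with $\fcL$ yields exactly $\cF$ after expansion. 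Since a minimum-cardinality agreement forest cannot have more than $h(T_1,T_2)+1$ components, the hypothesis $k\ge h(T_1,T_2)$ guarantees that Case~1a never aborts this path, so $\cF$ lies in the output.

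The ``$\Rightarrow$'' direction I would argue contrapositively through the abort criterion of Case~1a: if $k<h(T_1,T_2)$, then every relevant maximum agreement forest has too many components to pass the test in Case~1a, so each computational path that would otherwise reach Case~1b is cut before returning, and the output cannot contain all relevant maximum agreement forests. Here I would invoke the cardinality characterisation of relevant maximum agreement forests to identify the critical threshold with $h(T_1,T_2)$ exactly.

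The main obstacle I anticipate is the projection step in the completeness argument: I must show that the nonbinary cherry list faithfully simulates the binary one while correctly handling \textbf{(i)} pseudo cherries, which collapse several binary cherry reductions at a single multifurcation into one nonbinary step, and \textbf{(ii)} needless cherries, on which \textsc{allMulMAFs*} may additionally branch but which, as claimed in the text, can be discarded without losing any relevant maximum agreement forest. Establishing that the $M$-bookkeeping under these collapses reconstructs precisely the relevant, non-over-resolved forest $\cF$ rather than some strictly finer binary resolution is the delicate point, and I expect it to require an induction on the length of the cherry list in which each $\scLCA$-based reattachment performed during expansion is matched against the corresponding refinement present in $\hat\cF$.
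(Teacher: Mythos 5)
Your plan coincides with the paper's own proof: you propose the same decomposition into soundness and relevance of the $M$-expanded forests (the paper's Lemmas~\ref{32-lem-corX} and~\ref{32-lem-cor4}), existence of a binary resolution $\hat\cF$ of any relevant maximum agreement forest computable by \scAMSa{} on binary resolutions of the inputs (Lemma~\ref{32-lem-cor0}), projection of the binary cherry list to a nonbinary one replayed by \textsc{ProcessCherries} via an induction over the list length (Lemmas~\ref{32-lem-cor1} and~\ref{32-lem-cor2}), and the final elimination of needless and pseudo cherries to pass from \textsc{allMulMAFs*} back to \textsc{allMulMAFs} (Lemma~\ref{32-lem-cor3}). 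The obstacles you single out --- the pseudo-cherry collapse of several binary reductions at a multifurcation, the discarding of needless cherries, and the $M$-bookkeeping reconstructing the non-over-resolved forest --- are precisely the points the paper settles in those lemmas, so your route is the paper's route.
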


\begin{proof}
The proof of Theorem~\ref{32-th-cor1} is established in several substeps. First, given two rooted (nonbinary) phylogenetic $\cX$-trees $T_1$ and $T_2$, we will show that a binary resolution of each maximum agreement forest for $T_1$ and $T_2$ can be computed by applying the algorithm \scAMSa~to a binary resolution of $T_1$ and $T_2$, where, as already mentioned, \scAMSa~denotes a modification of the algorithm \scAMa~omitting the acyclic check. Next, we will show that for an agreement forest $\cF$ calculated by \textsc{allMulMAFs} there does not exist en edge $e$ such that $\cF\ominus\{e\}$ is still an agreement forest for $T_1$ and $T_2$, which directly implies that $\cF$ is relevant. Moreover, we will show that, if a cherry list $\fcL$ for two binary resolutions $\hat T_1$ and $\hat T_2$ of $T_1$ and $T_2$, respectively, computes a maximum agreement forest $\hat\cF$ for $\hat T_1$ and $\hat T_2$, $\fcL$ is mimicking a computational path of the algorithm \textsc{allMulMAFs*} calculating an agreement forest $\cF$ for $T_1$ and $T_2$ such that $\hat\cF$ is a binary resolution of $\cF$. Lastly, we will show that each maximum agreement forest $\cF$ computed by \textsc{allMulMAFs*} is also computed by \textsc{allMulMAFs}.

\begin{lemma}
\label{32-lem-cor0}
Given two rooted (nonbinary) phylogenetic $\cX$-trees $T_1$ and $T_2$, for each relevant maximum agreement forest $\cF$ for $T_1$ and $T_2$ of size $k$ there exists a binary resolution $\hat\cF$ of $\cF$ that is calculated by $$\text{\scAMSa$(\hat T_1,\hat T_2,\hat T_1,\hat T_2,k),$}$$ where $\hat T_1$ and $\hat T_2$ refers to binary resolutions of $T_1$ and $T_2$, respectively.
\end{lemma}

\begin{proof}
First notice that the algorithm \scAMa~without conducting the acyclic check computes all maximum agreement forest for two rooted binary phylogenetic $\cX$-trees, which is a direct consequence from \cite{Albrecht2015b}[Lemma~2]. Moreover, given a relevant maximum agreement forest $\cF$ for $T_1$ and $T_2$, a binary resolution $\hat\cF$ of $\cF$ is automatically a maximum agreement forest corresponding to $T_1(\hat\cF)$ and $T_2(\hat\cF)$. This is, in particular, the case, since just by definition each component $\hat F$ in $\hat\cF$ is a subtree of $T_1(\hat\cF)$ and $T_2(\hat\cF)$ and, as in $\cF$ all components are edge disjoint subtrees in $T_1$ and $T_2$, this has to hold for each of its binary resolution as well. Furthermore, $\hat\cF$ has to be of minimum cardinality, since, otherwise, $\cF$ would not be a maximum agreement forest for $T_1$ and $T_2$. Consequently, by applying the algorithm \scAMSa~to both trees $T_1(\hat\cF)$ and $T_2(\hat\cF)$ the maximum agreement forest $\hat\cF$ is calculated if $k\ge|\cF|$, which, finally, establishes the proof of Lemma~\ref{32-lem-cor0}.
\end{proof}

In the following, we will show that a cherry list $\fcL$ for two binary resolutions of two rooted phylogenetic $\cX$-trees $T_1$ and $T_2$ is also mimicking a computational path of \textsc{allMulMAFs*} applied to $T_1$ and $T_2$.

\begin{lemma}
\label{32-lem-cor1}
Let $\hat T_1$ and $\hat T_2$ be two binary resolutions of two rooted (nonbinary) phylogenetic $\cX$-trees $T_1$ and $T_2$, respectively. Moreover, let $\fcL$ be a cherry list for $\hat T_1$ and $\hat T_2$. Then, $\fcL$ is automatically a cherry list for $T_1$ and $T_2$.
\end{lemma}

\begin{proof}
Lemma~\ref{32-lem-cor1} obviously holds, if the cherry list $\fcL$ only exists of cherry actions $\fc_i=(\{a,c\},\phi_i)$ with $\phi_i\in\{\cup_{ac},\nmid_a,\nmid_c,\}$. This is, in particular, the case because these cherry actions only affect those nodes whose corresponding subtree has been fully contracted so far. When processing a cherry action $\fc_i=(\{a,c\},\phi_i)$ with $\phi_i=\cap_{ac}$, however, two slightly different forests $\hat\cF_{i+1}$ and $\cF_{i+1}$ can arise. More precisely, this is the case, if there is a multifurcating node $x$ lying on the path $P_{ac}$ connecting taxon $a$ and $c$ in $\cF_i$ providing a set $V_x=\{v_0,v_1,v_2,\dots,v_n\}$ of at least $3$ children, where $v_0$ denotes the node which is also part of $P$ and, if, additionally, $R_i$ contains two nodes $d$ and $e$ whose path connecting $d$ and $e$ contains a set of pendant subtrees each corresponding to $R_i(v_i)$, with $i>0$ and $v_i\in V_x$. 

Now, for simplicity, we assume that there is only one such multifurcating node~$x$ of out-degree three so that $V_x=\{v_0,v_1,v_2\}$. In this case, as $\hat\cF_i$ only consists of binary trees, by processing $\fc_i=(\{a,c\},\cap_{ac})$ two components $\hat F_{\alpha}$ and $\hat F_{\beta}$ rooted at $v_1$ and $v_2$, respectively, are added to $\hat\cF_{i+1}$ whereas to $\cF_{i+1}$ only one component $F_{\gamma}$ is added whose root contains two children corresponding to $v_1$ and $v_2$ (cf.~Fig.~\ref{32-fig-psePath}).

\begin{figure}[bt]
\centering
\includegraphics[scale = 1.3]{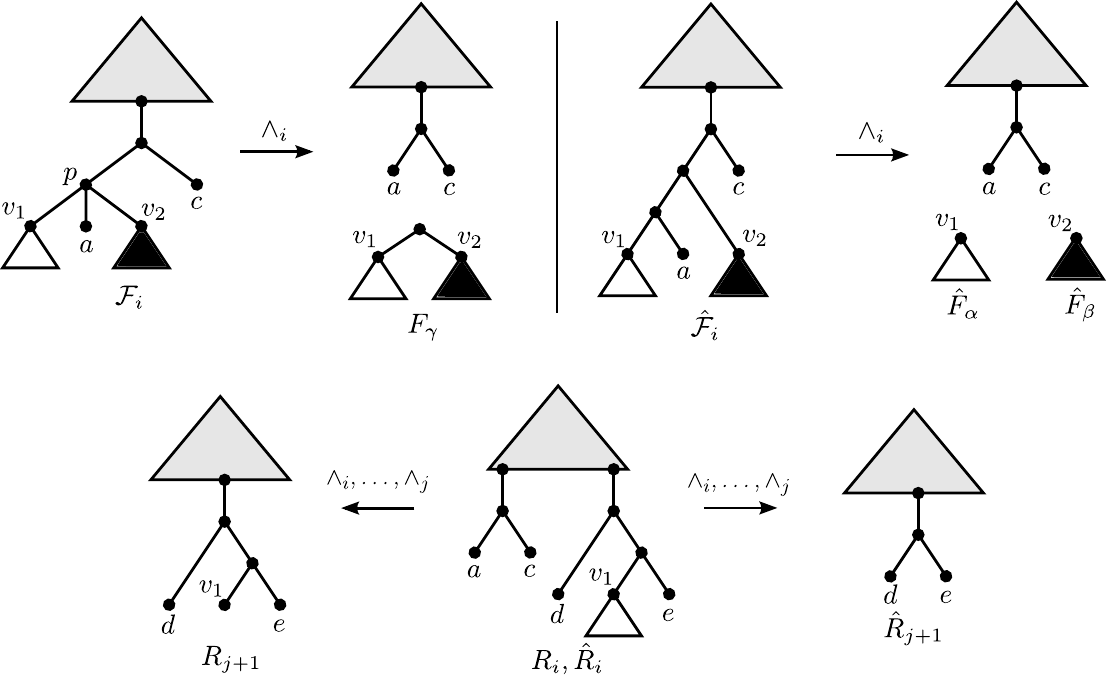}
\caption[An illustration of the scenario regarding Lemma~\ref{32-lem-cor1}]{An illustration of the scenario described in the proof of Lemma~\ref{32-lem-cor1}.} 
\label{32-fig-psePath}
\end{figure} 

Now, let $\fc_j$ be a cherry action in $\fcL$ with $j>i$ in which both components $\hat F_{\alpha}$ and $\hat F_{\beta}$ have been fully contracted so far. As a consequence, since the components $\hat F_{\alpha}$ and $\hat F_{\beta}$ only consist of a single taxon, which has been removed from $\hat R_{j+1}$ (cf.~Alg.~\ref{32-alg-proc}, Line~13), the two taxa $d$ and $e$ are now cherries in $\hat R_{j+1}$ which is, however, not the case in $R_{j+1}$ because $F_{\gamma}$ still contains the two nodes $v_1$ and $v_2$ (cf.~Fig.~\ref{32-fig-psePath}). Nevertheless, since in $F_{\gamma}$ the node $v_1$ and $v_2$ are leaves directly attached to the root, the cherry $\{d,e\}$ is a pseudo cherry in $R_{j+1}$ and, thus, an upcoming cherry action $\fc_k$ containing $\{d,e\}$ represents a pseudo cherry for $R_k$ and $\cF_k$ in this case.

As a consequence, each cherry of $\hat R_i$ and $\hat \cF_i$ corresponding to a cherry action $\fc_i$ in $\fcL$ is either a cherry or a pseudo cherry of $R_i$ and $\cF_i$ and, thus, Lemma~\ref{32-lem-cor1} is established.
\end{proof}

Next, we will show that by expanding a forest $\cF'$ on $\cX$ as prescribed in $M$ derived from calling \textsc{allMulMAFs}, the resulting forest is automatically an agreement forest for both input trees $T_1$ and $T_2$.

\begin{lemma}
Given two rooted (nonbinary) phylogenetic $\cX$-trees $T_1$ and $T_2$, let $\cF$ be a forest on $\cX$ that has been expanded as prescribed in $M$ after $\textsc{allMulMAFs*}(T_1,\{T_2\},\emptyset,k)$ has been called. Then, $\cF$ is an agreement forest for $T_1$ and $T_2$.
\label{32-lem-corX}
\end{lemma}

\begin{proof}
Since, obviously, $\cF$ is a partition of $\cX$, it suffices to consider each of the following cases describing a putative scenario leading to a forest that is not an agreement forest for both input trees $T_1$ and $T_2$, because either the refinement property or the node disjoint property in terms of $T_1$ or $T_2$ is not fulfilled. We will show, however, that during the execution of our algorithm \textsc{allMulMAFs*} each of those scenarios can be excluded.\\

\textbf{Case~1.} Assume there exists a component $F_i$ in $\cF$ such that $F_i$ is not a refinement of $T_2|_{\cX_i}$, where $\cX_i$ denotes the taxa set of $F_i$. As $\cF'$ has been derived from $T_2$ by cutting and contracting its edges, this automatically implies that a cherry has been expanded as prescribed in $M$ in respect of $\top$ instead of $\bot$. However, in $M$ a cherry is only then mapped to $\top$, if and only if, during the $i$-th recursive call, it is a common cherry of $R_i$ and $\cF_i$ and if both parents corresponding to the cherry in $R_i$ and $F_i$ are multifurcating nodes (cf.~Case~2a of \textsc{allMulMAFs*}). Moreover, such a cherry is immediately mapped back to $\bot$, if either the cherry itself or all its siblings have been cut (cf.~Case~1c,2c~of \textsc{allMulMAFs*}). Thus, such a component $F_i$ cannot exist in $\cF$.\\

\textbf{Case~2.} Assume there exists a component $F_i$ in $\cF$ such that $F_i$ is not a refinement of $T_1|_{\cX_i}$, where $\cX_i$ denotes the taxa set of $F_i$. This automatically implies that either a cherry has been expanded as prescribed in $M$ in respect of $\top$ instead of $\bot$ or, during the $i$-th recursive call, a cherry was not a common cherry of $R_i$ and $F_i$. Similar to Case~1, the first scenario can be excluded. Moreover, the latter scenario cannot take place either, since, in order to reduce $F_i$ to a single node, this common cherry must have been contracted which can only take place, if it was a common cherry of $R_i$ and $\cF_i$. Thus, such a component $F_i$ cannot exist in $\cF$.\\ 

\textbf{Case~3.} Assume there exist two components $F_i$ and $F_j$ in $\cF$, with taxa set $\cX_i$ and $\cX_j$, respectively, such that $T_2(\cX_i)$ and $T_2(\cX_j)$ are not edge disjoint in $T_2$. As $F_i$ and $F_j$ must be a refinement of $T_2|_{\cX_i}$ and $T_2|_{\cX_j}$, respectively, (cf.~Case~1) and both components have been derived from $T_2$ by cutting some of its edges, only one of both components can be part of $\cF$. As a direct consequence, such two components cannot exist in $\cF$.\\ 

\textbf{Case~4.} Assume there exist two components $F_i$ and $F_j$ in $\cF$ such that $T_1(\cX_i)$ and $T_1(\cX_j)$ are not edge disjoint in $T_1$. As shown in Case~2, $F_i$ and $F_j$ must be a refinement of $T_1|_{\cX_i}$ and $T_1|_{\cX_j}$, respectively, and, thus, in order to obtain $F_i$ and $F_j$ the following steps must be performed during the execution of \textsc{allMulMAFs*}. Let $E_i$ be an edge set that is only contained in $T_1(\cX_i)$ and not in $T_1(\cX_j)$. In order to obtain $F_i$, some of those edges in $E_j$ must be cut, whereas, in order to obtain $F_j$, all of them must be contracted which leads to a contradiction (cf.~Fig.~\ref{32-fig-edgeDis}). Thus, such two components cannot exist in $\cF$.\\

\begin{figure}[bt]
\centering
\includegraphics[scale = 1.3]{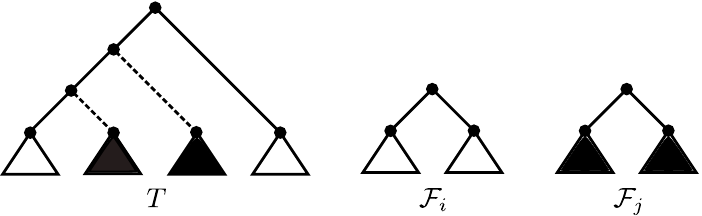}
\caption[An illustration of Case 4 regarding the proof of Lemma~\ref{32-lem-corX}]{An illustration of the scenario described in Case 4 corresponding to the proof of Lemma~\ref{32-lem-corX}. In order to obtain $F_i$ from $T$, both dashed edges have to be cut whereas, in order to obtain $F_j$, these two edges have to be contracted which is a contradiction.} 
\label{32-fig-edgeDis}
\end{figure} 

Finally, by combining all four cases Lemma~\ref{32-lem-corX} is established.
\end{proof}

Moreover, in the following, we will show that each agreement forest $\cF$ that is reported by \textsc{allMulMAFs*} is relevant which means that $\cF$ does not contain an edge $e$ such that $\cF\ominus\{e\}$ is still an agreement forest for both input trees.

\begin{lemma}
Let $T_1$ and $T_2$ be two rooted (nonbinary) phylogenetic trees, then, each agreement forest that is reported by applying \textsc{allMulMAFs*} to $T_1$ and $T_2$ is relevant.
\label{32-lem-cor4}
\end{lemma}

\begin{proof}
Just by definition, given two rooted (nonbinary) phylogenetic trees $T_1$ and $T_2$, an agreement forest $\cF$ for $T_1$ and $T_2$ that is not relevant has to contain an edge $e$ such that $\cF\ominus\{e\}$ is still an agreement forest for $T_1$ and $T_2$. Such an edge $e$, however, can only arise, if a cherry of a multifurcating node is expanded in respect of $\bot$ instead of $\top$. Initially, such a cherry must have been set to $\top$ because during the $i$-th recursive call both corresponding parents in $R_i$ and $\cF_i$, respectively, must have been multifurcating nodes. The only scenario setting $\top$ to $\bot$ would arise, if the cherry itself or all its siblings are cut during subsequent recursive calls. In this case, however, this cherry has to be set to $\bot$, since, otherwise the resulting forest would not be an agreement forest for $T_1$ and $T_2$. Thus, such an edge $e$ cannot exist and, consequently, Lemma~\ref{32-lem-cor4} is established. 
\end{proof}

Since both algorithms \textsc{allMulMAFs*} and \textsc{allMulMAFs} process common cherries and contradicting cherries in the same way, Lemma~\ref{32-lem-cor4}, obviously, has to hold for \textsc{allMulMAFs} as well.

\begin{corollary}
Let $T_1$ and $T_2$ be two rooted (nonbinary) phylogenetic trees, then, each agreement forest that is reported by applying \textsc{allMulMAFs} to $T_1$ and $T_2$ is relevant.
\label{32-cor-cor1}
\end{corollary}

Now, let $T_1$ and $T_2$ be two rooted (nonbinary) phylogenetic $\cX$-trees. By the following Lemma \ref{32-lem-cor2}, we will show that for each maximum binary agreement forest $\hat \cF$, which can be computed by applying \textsc{ProcessCherries} to a cherry list $\fcL$ for two binary resolutions of $T_1$ and $T_2$, by calling $\textsc{ProcessCherries}(T_1,T_2,\fcL)$ a forest $\cF$ is computed such that $\hat\cF$ is a binary resolution of $\cF$.

\begin{lemma}
Given two rooted (nonbinary) phylogenetic $\cX$-trees $T_1$ and $T_2$, let $\hat T_1$ and $\hat T_2$ be two binary resolutions of $T_1$ and $T_2$, respectively. Moreover, let $\hat \cF$ be an agreement forest for $\hat T_1$ and $\hat T_2$ obtained from calling $\textsc{ProcessCherries}(\hat T_1,\{\hat T_2\},\fcL)$, where $\fcL$ denotes a cherry list for $\hat T_1$ and $\hat T_2$. Then, a relevant agreement forest $\cF$ is calculated by calling $\textsc{ProcessCherries}(T_1,\{T_2\},\fcL)$ such that $\hat\cF$ is a binary resolution of $\cF$.
\label{32-lem-cor2}
\end{lemma}

\begin{proof}
We will first show by induction a slightly modified version of Lemma~\ref{32-lem-cor2}. 

Given two rooted (nonbinary) phylogenetic $\cX$-trees $T_1$ and $T_2$. Let $\hat T_1$ and $\hat T_2$ be two binary resolutions of $T_1$ and $T_2$, respectively, and let $\hat \cF_i$ and $\cF_i$ be each forest corresponding to iteration $i$ while executing $$\textsc{ProcessCherries}(\hat T_1,\{\hat T_2\},\fcL)\text{ and }\textsc{ProcessCherries}(T_1,\{T_2\},\fcL),$$ respectively, where $\fcL=(\fc_1,\fc_2,\dots,\fc_n)$ is a cherry list for $\hat T_1$ and $\hat T_2$. Then, $\hat \cF_i$ is a called \emph{pseudo binary resolution} of $\cF_i$, which is defined as follows. Given two forests $\hat\cF$ and $\cF$ for a phylogenetic $\cX$-tree, we say that $\hat\cF$ is a pseudo binary resolution of $\cF$, if for each component $\hat F$ in $\hat\cF$ there exists a component $F$ in $\cF$ such that one of the two following properties hold.

\begin{itemize}
\item[(i)] $\hat F$ is a binary resolution of $F$.
\item[(ii)] $\hat F$ is a binary resolution of $F(v)$, where $v$ is a child of the root of $F$.
\end{itemize} 

The following proof is established by an induction on $i$ denoting the position of a cherry action in $\fcL=(\fc_1,\fc_2,\dots,\fc_n)$.\\

\textbf{Base case.} At the beginning, $F_1$ only consists of $\hat T_2$, which is a binary resolution of $T_2$. Thus, the assumption obviously holds for $i=1$.

\textbf{Inductive step.} Depending on the cherry action $\fc_i=(\{a,c\},E)$, the forest $\hat \cF_{i+1}$ can be obtained from $\hat \cF_i$ in the following ways.
\begin{itemize}
\item[(i)] If $\{a,c\}$ is a pseudo cherry, a set of nodes $V'$ that is attach to the root of a component in $\cF_i$ is cut. Since $\fcL$ is a cherry list for $\hat T_1$ and $\hat T_2$ and, thus, $\{a,c\}$ is a cherry in $\hat R_i$, each of node in $V'$ already refers to components in $\hat\cF_i$ all consisting only of isolated nodes. Thus, after cutting the in-edge of each node in $V'$, $\hat\cF_i$ is still a pseudo binary resolution of $\cF_i$.
\item[(ii)] If $\{a,c\}$ is a common cherry and, thus $\phi_i=\cup_{ac}$, in both forests $\hat \cF_i$ and $\cF_i$ the two taxa $a$ and $c$ are contracted. Consequently, since $\hat \cF_i$ is a pseudo binary resolution of $\cF_i$, this directly implies that $\hat \cF_{i+1}$ is a pseudo binary resolution of $\cF_{i+1}$ as well.
\item[(iii)] If $\{a,c\}$ is a contradicting cherry and $\phi_i=\nmid_a$ (or $\phi_i=\nmid_c$), then, in both forests $\hat \cF_i$ and $\cF_i$ the node labeled by taxon $a$ (or taxon $c$) is cut. Again, no matter if $a\sim c$ or $a\not\sim c$ holds, since $\hat \cF_i$ is a pseudo binary resolution of $\cF_i$, this directly implies that $\hat \cF_{i+1}$ is a pseudo binary resolution of $\cF_{i+1}$ as well.
\item[(iv)] If $\{a,c\}$ is a contradicting cherry and $\phi_i=\cap_{ac}$, in both forests $\hat\cF_i$ and $\cF_i[a\sim c]$ each pendant subtree lying on the path connecting both leaves labeled by $a$ and $c$, respectively, is cut. Let $\hat \cF'$ and $\cF'$ be those component arising from cutting $\hat\cF_i$ and $\cF_i[a\sim c]$, respectively. Since $\hat \cF_i$ is a binary resolution of $\cF_i$, $|\cF'|\ge|\hat\cF'|$ holds which means, in particular, that each $\hat F'_i$ in $\hat\cF'$ is either a binary resolution of $F'_j$ or a binary resolution of $F'_j(v)$ in $\cF'$, where $v$ corresponds to a child whose parent is the root of $F'_j$. Thus, since $\hat \cF_i$ is a pseudo binary resolution of $\cF_i$, this directly implies that $\hat \cF_{i+1}$ is a pseudo binary resolution of $\cF_{i+1}$ as well.
\end{itemize} 
Now, from the induction we can deduce that, independent from the cherry action $\fc_i$, $\hat\cF_i$ is always a pseudo binary resolution of $\cF_i$. Moreover, let $\hat\cF_{n+1}$ and $\cF_{n+1}$ be the two forests obtained from $\hat\cF_n$ and $\cF_n$, respectively, by applying $\fc_n$. Then, since $\hat\cF$ is an agreement forest for $\hat T_1$ and $\hat T_2$, all components in $\hat\cF_{n+1}$ only consist of single isolated nodes which directly implies that $\cF_{n+1}$ does not contain any cherries. Furthermore, due to Lemma~\ref{32-lem-corX}, by expanding $\cF_{n+1}$ as prescribed in $M$ a relevant agreement forest $\cF$ arises such that $\hat\cF$ is a binary resolution of $\cF$ which completes the proof of Lemma~\ref{32-lem-cor2}.
\end{proof}

In the following, we will show that Lemma~\ref{32-lem-cor2} also holds for the original algorithm \textsc{allMulMAFs}.

\begin{lemma}
Given two rooted (nonbinary) phylogenetic $\cX$-trees $T_1$ and $T_2$, let $\hat\cF$ be a binary maximum agreement forest for $T_1$ and $T_2$. Then, by calling $$\textsc{allMulMAFs}(T_1,\{T_2\},\emptyset,k)$$ a relevant maximum agreement forests $\cF$ for $T_1$ and $T_2$ is computed such that $\hat\cF$ is a binary resolution of $\cF$, if and only if $k\ge h(T_1,T_2)$.
\label{32-lem-cor3}
\end{lemma}

\begin{proof}
Notice that, as proven in Lemma~\ref{32-lem-cor2}, Theorem~\ref{32-lem-cor3} holds for the modified algorithm \textsc{allMulMAFs*}. Thus, in order to establish Lemma~\ref{32-lem-cor3}, we just have to show that the following two differences between both algorithms \textsc{allMulMAFs*} and \textsc{allMulMAFs} do not have an impact on the computation of maximum agreement forests.\\

\textbf{Needless cherries.} First of all, let $\fcL$ be a cherry list for $T_1$ and $T_2$ mimicking a computational path of \textsc{allMulMAFs*} resulting in an agreement forest $\cF$ for $T_1$ and $T_2$. Moreover, let $\fcL$ contain a cherry action $\fc_i=(\{a,b\},\phi_i)$ in which $\{a,b\}$ is a contradicting cherry of $R_i$ and $\cF_i$. Now, if $R_i$ contains a taxon $c$ such that $\{a,c\}$ is a common cherry, we call $\{a,b\}$ a \emph{needless cherry}. Notice that a computational path corresponding to the original algorithm \textsc{allMulMAFs} does not consider needless cherries as it always prefers common cherries to contradicting cherries. In the following, however, we will show that for the computation of maximum agreement forests each computational path processing needless cherries can be neglected. 

Let $\cF$ be an agreement forest resulting from a computational path of \textsc{allMulMAFs*} in which, instead of processing a common cherry $\{a,c\}$, a needless cherry $\{a,b\}$ is processed by the cherry action $\fc_i=(\{a,b\},\nmid_a)$. This implies that $\cF$ contains a component $F_a$ corresponding to the expanded taxon $a$, which has been cut during the $i$-th iteration. Moreover, let $F_c$ be the component in $\cF$ containing the node $v_c$ corresponding to taxon $c$ in $\cF_i$. Since $\{a,c\}$ has been a common cherry in iteration $i$, by attaching $F_a$ back to the in-edge of $v_c$ an agreement forest of size $k-1$ arises and, thus, $\cF$ cannot be a maximum agreement forest. Consequently, from cutting instead of contracting common cherries a maximum agreement forest cannot arise and, thus, for the computation of maximum agreement forests each computational path of \textsc{allMulMAFs*} processing needless cherries can be neglected.

Notice that, in this case, the cherry action $\fc_i=(\{a,b\},\nmid_b)$ would be also not be considered. However, after having contracted the common cherry $\{a,c\}$, $b$ could still be cut selecting a cherry action involving one of its siblings.\\

\textbf{Pseudo cherries.} Furthermore, in contrast to the modified algorithm \textsc{allMulMAFs*}, a computational path corresponding to the original algorithm \textsc{allMulMAFs} does not consider pseudo cherries. In the following, however, we will show that for an agreement forest $\cF$ resulting from a computational path processing pseudo cherries, there exists a different computational path calculating $\cF$ without considering any pseudo cherries.

Let $\fcL$ be a cherry list for $T_1$ and $T_2$ mimicking a computational path of \textsc{allMulMAFs*} resulting in an agreement forest $\cF$ and let $\fc_i$ be a cherry action whose corresponding cherry $\{a,c\}$ is a pseudo cherry of $R_i$ and $\cF_i$. Moreover, let be $(b_1,b_2,\dots,b_k)$ and $(b_{k+1},b_2,\dots,b_n)$ be those taxa corresponding to each pendant node lying on the path connecting $a$ and $\scLCA_{R_i}(\{a,c\})$ as well as $c$ and $\scLCA_{R_i}(\{a,c\})$, respectively. Then, we can replace $\fc_i=(\{a,c\},\phi_i)$ through the sequence of cherry actions $$(\{a,b_1\},\nmid_{b_1}),\dots,(\{a,b_k\},\nmid_{b_k}),(\{c,b_{k+1}\},\nmid_{b_{k+1}}),\dots,(\{c,b_n\},\nmid_{b_n}),(\{a,c\},\phi_i)$$ neither containing pseudo cherries nor needles cherries such that the agreement forest $\cF$ is still computed. This means, in particular, that each tree operation that is conducted for preparing a pseudo cherry can be also realized by a sequence of cherry actions neither containing needless cherries nor pseudo cherries.\\

As shown above, for a relevant maximum agreement forest $\cF$ our modified algorithm \textsc{allMulMAFs*} always contains a computational path calculating $\cF$ by neither taking needless cherries nor pseudo cherries into account. Thus, each relevant maximum agreement forest for $T_1$ and $T_2$ that is calculated by \textsc{allMulMAFs*} is also calculated by \textsc{allMulMAFs} and, as a direct consequence, Lemma~\ref{32-lem-cor3} is established.
\end{proof}

Now, in a last step, we can finish the proof of Theorem~\ref{32-th-cor1}. Let $\hat T_1$ and $\hat T_2$ be two binary resolutions of two rooted (nonbinary) phylogenetic $\cX$-trees $T_1$ and $T_2$, respectively, and let $\hat\cF$ be an agreement forest for $\hat T_1$ and $\hat T_2$. Then, by combining Corollary~\ref{32-cor-cor1} and Lemma~\ref{32-lem-cor3} we can deduce that the algorithm \textsc{allMulMAFs} computes a relevant maximum agreement forest $\cF$ for $T_1$ and $T_2$ such that $\hat\cF$ is a binary resolution of $\cF$. This automatically implies, that our algorithm calculates all relevant maximum agreement forests for $T_1$ and $T_2$ and, thus, Theorem~\ref{32-th-cor1} is finally established.
\end{proof}

\subsection{Runtime of \textsc{allMulMAFs}}

In this section, we discuss the theoretical worst-case runtime of the algorithm \textsc{allMulMAFs} in detail.

\begin{theorem}
Let $T_1$ and $T_2$ be two rooted phylogenetic $\cX$-trees and $\cF$ be a relevant maximum agreement forest for $T_1$ and $T_2$ containing $k$ components. The theoretical worst-case runtime of the algorithm \textsc{allMulMAFs} applied to $T_1$ and $T_2$ is $O(3^{|\cX|+k}|\cX|)$.
\label{32-th-rt1}
\end{theorem}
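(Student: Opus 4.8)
The plan for proving Theorem~\ref{32-th-rt1} is to bound the size of the recursion tree generated by \textsc{allMulMAFs} and multiply it by the cost of work performed at each node. The runtime is dominated by the branching structure: every recursive call either makes deterministic progress (Case~1b,~1c, and Case~2, which contract a common cherry or remove an isolated leaf) or branches (Case~3a into two paths, Case~3b into three paths). Since Case~3b is the widest branching, the factor $3$ in the base of the exponent must come from there. First I would argue that the \emph{depth} of any branching path is bounded: a branch is only taken when an edge of $\cF$ is cut, and by Case~1a the computation aborts once more than $k$ components exist, so at most $k$ cuts occur along any root-to-leaf path of the recursion tree. Naively this would give $3^k$ leaves, but that is not the whole story, because between successive branchings the tree $R$ shrinks, and I must account for how many deterministic steps interleave.

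The key step is to set up a suitable potential or progress measure that simultaneously captures the decrease of $R$ and the increase of $\cF$, and then show the total number of recursive calls is $O(3^{|\cX|+k})$. I would argue as follows: each recursive call reduces the quantity $\ell(R)$ (the number of labeled nodes of $R$) by one in Case~1c and Case~2, while Case~3 leaves $R$ unchanged but increases the component count of $\cF$. Since $\ell(R)$ starts at $|\cX|+1$ and a branch cuts at most $k$ edges, the length of every computational path is at most $|\cX|+k$ (each step either consumes one of the $|\cX|$-bounded labeled nodes or one of the $k$ allowed cuts). Because the maximum branching factor is $3$ (Case~3b), the total number of leaves in the recursion tree is at most $3^{|\cX|+k}$, and hence the total number of nodes is of the same order. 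This is the combinatorial heart of the argument, and I expect the main obstacle to be justifying rigorously that the $3$-way branch cannot stack to depth more than $|\cX|+k$ while the interleaved contraction steps are charged correctly --- in particular, one must verify that processing \emph{all} cherries of a selected node $v$ in Case~3 (the ``for each cherry'' loop) does not blow the branching factor beyond $3$ per edge-cut, which requires noting that each such sub-branch still cuts at least one edge and thus is absorbed by the $k$-cut budget.

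Finally I would bound the per-call cost. Within a single recursive call the algorithm must identify a cherry or pseudo-cherry, perform a cherry contraction, an edge cut, or the refinement $F[a\sim c]$ together with the pendant-edge set $E_B$, and update the map $M$ --- each of these is a local tree operation touching $O(|\cX|)$ nodes and edges, so the work per call is $O(|\cX|)$. Multiplying the $O(3^{|\cX|+k})$ bound on the number of recursive calls by the $O(|\cX|)$ per-call cost yields the claimed $O(3^{|\cX|+k}|\cX|)$ total runtime. The expansion of the final forest via $M$ in Case~1b costs at most $O(|\cX|)$ per returned forest and is therefore absorbed into the same bound. The one point to handle with care is that the pseudo-cherry preparation and the refinement step could in principle remove several leaves at once; I would note that these only accelerate progress toward the abort criteria and therefore do not increase the asymptotic count, so the stated bound holds.
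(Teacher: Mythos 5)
Your proposal is correct and follows essentially the same argument as the paper: every computational path has length $O(|\cX|+k)$ (at most $|\cX|$ cherry contractions plus at most $k$ edge cuts before the Case~1a abort), the branching factor is bounded by $3$ via Case~3b, and each recursive call costs $O(|\cX|)$, giving $O(3^{|\cX|+k}|\cX|)$. If anything, you are slightly more careful than the paper, which does not explicitly address your observation that the ``for each cherry'' loop in Case~3 must be charged against the $k$-cut budget.
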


\begin{proof}
Let $\cF=\{F_{\rho},F_1,F_2,\dots,F_{k-1}\}$ be an agreement forest for $T_1$ and $T_2$ of size $k$. To obtain $\cF$ from $T_2$, obviously $k-1$ edge cuttings are necessary. Moreover, in order to reduce the size of the leaf set $\cX$ of $R$ to $1$, to each component $F_i$ in $\cF$ we have to apply exactly $|\cL(F_i)|-1$ cherry contractions. Consequently, at most $|\cX|$ cherry contractions have to be performed in total. Thus, our algorithm has to perform at most $O(|\cX|+k)$ recursive calls for the computation of $\cF$. Now, as one of these recursive calls can at least branch into $3$ further recursive calls, $O(3^{|\cX|+k})$ is an upper bound for the total number of recursive calls that are performed throughout the whole algorithm. Moreover, each case that is conducted during a recursive (cf.~Sec.~\ref{32-sec-algAF}) can be done in $O(|\cX|)$ time and, thus, the theoretical worst-case runtime of the algorithm can be estimated with $O(3^{|\cX|+k}|\cX|)$.
\end{proof}

\subsection{Conclusion}

In this section, we have presented the algorithm \textsc{allMulMAFs} calculating all relevant maximum agreement forests for two rooted (nonbinary) phylogenetic $\cX$-trees. Therefor, we have established a detailed formal proof showing the correctness of the algorithm which is based on both previously presented algorithms \textsc{allMAAFs} and \scAMa. In the next section, we will show how to further modify the algorithm \textsc{allMulMAFs} so that now all relevant maximum \emph{acyclic} agreement forests are calculated. 

\section{The algorithm \textsc{allMulMAAFs}}
\label{34-chap-allMulMAAFs}

In this section, we show how to extend the algorithm \textsc{allMulMAFs}, presented in Section~\ref{32-sec-algAF}, such that the reported agreement forests additional satisfy the acyclic constraint, which automatically implies that the extended algorithm will calculated all relevant maximum \emph{acyclic} agreement forests for two rooted \emph{nonbinary} phylogenetic $\cX$-trees. As mentioned previously, the acyclic constraint plays an important role for the construction of hybridization networks as, for example, demonstrated by the algorithm \textsc{HybridPhylogeny} \cite{Baroni2006}. More specifically, this algorithms generates a hybridization network displaying two rooted \emph{bifurcating} phylogenetic $\cX$-trees from the components of an acyclic agreement forest of those two trees. Thus, we consider the computation of nonbinary maximum acyclic agreement forests as a first step to come up with minimum hybridization networks displaying the refinements of two rooted nonbinary phylogenetic $\cX$-trees.

Broadly speaking, the algorithm \textsc{allMulMAFs} can be used to make progress towards an agreement forest for two rooted (nonbinary) phylogenetic $\cX$-trees $T_1$ and $T_2$ as long as the set of components $\cF$ does not satisfy all properties of an agreement forest. Once our algorithm has successfully computed a maximum agreement forest $\cF$ for $T_1$ and $T_2$, we can apply a specific tool that is able to check, if we can refine $\cF$ to a maximum acyclic agreement forest. Such a refinement of an agreement forest is done by cutting a minimum number of edges within its components such that each directed cycle of the underlying ancestor-descendant graph $AG(T_1,T_2,\cF)$ is dissolved. 

Notice that this problem is closely related to the \emph{directed feedback vertex set problem}. More specifically, given a directed graph $G$ with node set $V$, a feedback vertex set $V'$ is a subset of $V$ containing at least one node of each directed cycle of $G$. This implies, by deleting each node of $V'$ together with its adjacent edges, each directed cycle is automatically removed. Now, based on a directed graph, the directed feedback vertex set problem consists of minimizing the size of such a feedback vertex set.

\subsection{Refining agreement forests}

In this section we present a tool that enables the refinement of an agreement forest. We call this tool an \emph{expanded ancestor-descendant graph}. Notice that this tool has been previously published under a different term as we state in the following.

\begin{remark}
The following concept of an \emph{expanded ancestor-descendant graph} corresponds to the concept of an \emph{expanded cycle graph} given in the work of Whidden \textit{et al.} \cite{Whidden2011}. The latter concept, however, can be only applied to agreement forests corresponding to rooted \emph{binary} phylogenetic $\cX$-trees. Hence, we have adapted this concept such that it can be also applied to agreement forests corresponding to rooted \emph{nonbinary} phylogenetic $\cX$-trees. Notice that, adapting the concept of an expanded cycle graph to nonbinary agreement forests has been also examined in the master thesis of Li \cite{Li2014}. In this work, each step that is necessary to compute the hybridization number for two rooted (nonbinary) phylogenetic $\cX$-trees is presented in detail by, additionally, discussing its correctness.
\end{remark}

In the following, we give a short overview of how an expanded ancestor-descendant graph is defined and how this graph can be used to transform agreement forests into acyclic agreement forests.\\

\textbf{Expanded ancestor-descendant graph.} The tool that enables the refinement of an agreement forest $\cF$ for two rooted (nonbinary) phylogenetic $\cX$-trees $T_1$ and $T_2$ to an acyclic agreement forest is an \emph{expanded ancestor-descendant graph} $AG^\text{ex}(T_1(\cF),T_2(\cF),\cF)$. In contrast to the ancestor-descendant graph, each node of this graph corresponds to exactly one particular node of a component in $\cF$. Thus, from such a graph one can directly figure out those edges of a component that have to be cut in order to remove a directed cycle (cf.~Fig.~\ref{33-fig-ExpAG}). 

Given two rooted phylogenetic $\cX$-trees $T_1$ and $T_2$ and a nonbinary maximum acyclic agreement forest $\cF$ for $T_1$ and $T_2$, the corresponding expanded ancestor-descendant graph $AG^\text{ex}(T_1(\cF),T_2(\cF),\cF)$ consists of the following nodes and edges. First of all, $\cF$ is a subset of $AG^\text{ex}(T_1(\cF),T_2(\cF),\cF)$, which means that the graph contains all nodes and edges corresponding to all components in $\cF$. Moreover, $AG^\text{ex}(T_1(\cF),T_2(\cF),\cF)$ contains a set of \emph{hybrid edges} each connecting two specific nodes each being part of two different components. More precisely, those edges are defined as follows. 

Given a node $v$ of a component $F_j$ in $\cF$, the function $\phi_i(v)$ refers to the lowest common ancestor in $T_i(\cF)$, with $i\in\{1,2\}$, of each leaf that is labeled by a taxon contained in $\cL(F_j(v))$. Notice that the node $\phi_i(\cdot)$ is well defined, which means there exists exactly one node in $T_i(\cF)$ to which $\phi_i(\cdot)$ applies. Equivalently, the function $\phi_i^{-1}(\cdot)$ maps nodes from $T_i$, with $i\in\{1,2\}$, back to a component in $\cF$. More precisely, let $E_i$, with $i\in\{1,2\}$, be the set of edges consisting of all in-edges of all lowest common ancestors in $T_i$ of the taxa set of each $F_j$ in $\cF\setminus\{F_{\rho}\}$. Then, the node $v\in T_i$ maps back to the node in $\cF$ representing the lowest common ancestor of those taxa that can be reached from $v$ by not using an edge in $E_i$. Notice, however, that this function is only defined for those nodes that are either labeled or are part of a path connecting two labeled nodes $a$ and $b$ such that $\phi_i^{-1}(a)$ and $\phi_i^{-1}(b)$ are contained in the same component $F_j$ in $\cF$. Similar to the binary case, since the graph is built for the trees $T_1(\cF)$ and $T_2(\cF)$ reflecting $\cF$, the function $\phi_i^{-1}(\cdot)$ is well defined, which means that, if defined, there exists exactly one node in $\cF$ to which $\phi_i^{-1}(\cdot)$ applies. 

Now, based on the definitions of these two functions, $AG^\text{ex}(T_1(\cF),T_2(\cF),\cF)$ contains the following hybrid edges. Let $w$ be a node in this graph corresponding to the root of a component $F_j$ not equal to $F_{\rho}$. Moreover, for the tree $T_i(\cF)$ with $i\in\{1,2\}$, let $v'$ be the lowest ancestor of $\phi_i(w)$ such that $\phi_i^{-1}(v')$ is defined. In more detail, let $P_{\phi}=(v_1,\dots,v_n)$ be those nodes lying on the path connecting the parent $v_1$ of $v$ and the root $v_n$ of $T_1$ such that $v_j$ with $j\in[2:n]$ is the parent of $v_{j-1}$. Then, $$v'=\min_j\{v_j:v_j \in P_{\phi} \wedge \phi_i^{-1}(v_j) \text{ is defined}\}.$$ Based on $v'$ and $w$, $AG^\text{ex}(T_1(\cF),T_2(\cF),\cF)$ contains a hybrid edge $(\phi_i^{-1}(v'),w)$. Notice that, if $\cF$ contains $k$ components, for each component except $F_{\rho}$ two hybrid edges corresponding to $T_1$ and $T_2$ are inserted which are $2k-2$ hybrid edges in total. Furthermore, the target node of a hybrid edge does always refer to a root node of a component $F_j$ in $\cF$ whereas the source node never does.\\

\textbf{Exit nodes.} Given two rooted phylogenetic $\cX$-trees $T_1$ and $T_2$ as well as a nonbinary maximum acyclic forest $\cF$ for $T_1$ and $T_2$, an \emph{exit node of $AG^\text{ex}(T_1(\cF),T_2(\cF),\cF)$} is defined as follows. Let $H_i$ be the set of hybrid edges in $AG^\text{ex}(T_1(\cF),T_2(\cF),\cF)$ resulting from $T_i$ with $i\in\{1,2\}$. Now, given a directed cycle in $AG^\text{ex}(T_1(\cF),T_2(\cF),\cF)$ running through the hybrid edges $E_H=\{h_0,\dots,h_{n-1}\}$ in sequential order, then, the source node $v_i$ of a hybrid edge $h_i=(v_i,w_i)$ in $E_H$ is called an exit node, if $h_i$ is contained in $H_1$ and $h_{j}$, with $j=(i-1)\mod{n}$, is contained in $H_2$ or vice versa.\\

Now, based on an expanded ancestor-descendant graph we can refine an agreement forest by \emph{fixing its exit nodes}. An exit node $v$ belonging to the component $F_j$ is fixed by cutting each edge lying on the path connecting $v$ with the node referring to the root node of $F_j$. Notice that by cutting $k$ of those edges, the resulting agreement forest $\cF'$ consists of $|\cF|+k$ components. 

\begin{figure}[bt]
\centering
\includegraphics[scale = 1.15]{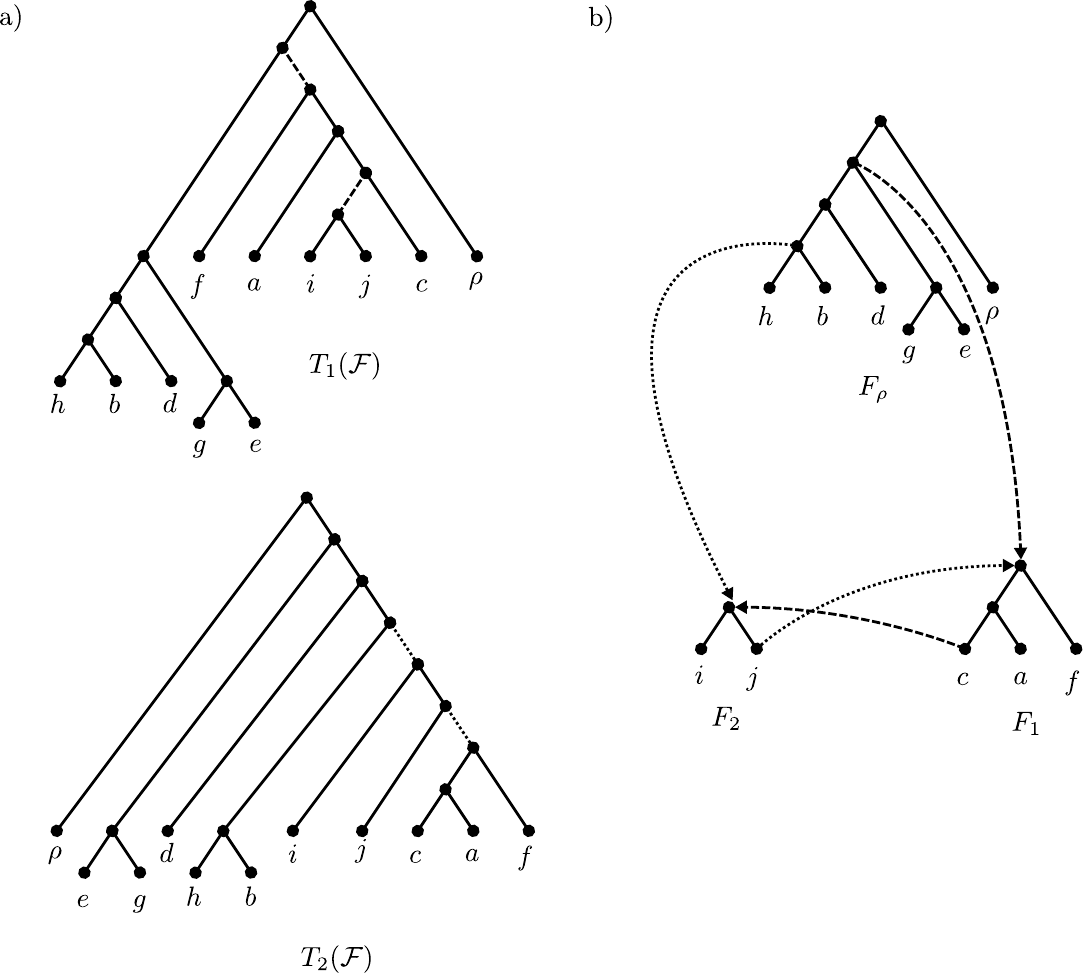}
\caption[An example of an expanded ancestor-descendant graph]{\textbf{(a)} The same two trees $T_1(\cF)$ and $T_2(\cF)$ as depicted in Figure~\ref{31-fig-TRef}. Here, the set of dashed edges and the set of dotted edges refers to the in-edges of the nodes corresponding to the lowest common ancestors of $\cL(F_1)$ and $\cL(F_2)$. \textbf{(b)}The expanded ancestor-descendant graph $AG^\text{ex}(T_1(\cF),T_2(\cF),\cF)$ with $\cF=\{F_{\rho},F_1,F_2\}$. Dashed edges are hybrid edges resulting from $T_1(\cF)$ and dotted edges are hybrid edges resulting from $T_2(\cF)$. Here, for a better overview, the directions of edges corresponding to components of $\cF$ are omitted. Notice that, by fixing the exit node corresponding to taxon $j$, all directed cycles are removed and a maximum acyclic agreement forest for $T_1$ and $T_2$ with size $4$ arises.} 
\label{33-fig-ExpAG}
\end{figure}

\subsection{The algorithm}

We can easily turn the algorithm \textsc{allMulMAFs} into the algorithm \textsc{allMulMAAFs} by applying a post-processing step refining agreement forests. More precisely, given an agreement forest $\cF$ for two rooted phylogenetic $\cX$-trees $T_1$ and $T_2$, by applying the following refinement procedure only those relevant acyclic agreement forests are returned whose size is smaller than or equal to $k$.

\begin{itemize}
\item[(1)] Compute two trees $T_1(\cF)$ and $T_2(\cF)$ reflecting $\cF$.
\item[(2)] Build the expanded ancestor-descendant graph $AG^\text{ex}(T_1(\cF),T_2(\cF),\cF)$.
\item[(3)] Compute the set of exit nodes $V_H$ of $AG^\text{ex}(T_1(\cF),T_2(\cF),\cF)$.
\item[(4)] For each exit node $v$ in $V_H$ turn $\cF$ into $\cF'$ by fixing $v$.
\item[(5)] For each agreement forest $\cF'$ with $|\cF'|\le k$ continue with step 5a or 5b.
\begin{itemize}
\item[(5a)] If $\cF_i'$ is acyclic, return $\cF'$.
\item[(5b)] Otherwise, if $\cF'$ is \emph{not} acyclic, repeat step 2--5 with $\cF'$.
\end{itemize}
\end{itemize}

Based on these steps, by modifying Case~1b as follows, we can easily turn the algorithm \textsc{allMulMAFs} into an algorithm computing a set of maximum acyclic agreement forests.\\

\textbf{Case 1b'.} If $R$ only consists of a single leaf, first each $F_i$ in $\cF$ is expanded as prescribed in $M$ and then $\cF$ is refined with the help of $AG^\text{ex}(T_1(\cF),T_2(\cF),\cF)$ into $\cF'$. Finally, $\cF'$ is returned.\\

This means that, each time before reporting an agreement forest $\cF$, we first check, if we can refine $\cF$ to an acyclic agreement forest $\cF'$ of size smaller than or equal to $k$. If this is possible, we return $\cF'$, else, we return the empty set. 

\subsection{Correctness of \textsc{allMulMAAFs}}
\label{33-sec-cor2}

In this section, we show that by applying the presented algorithm \textsc{allMulMAAFs} one can calculate all relevant maximum acyclic agreement forests for two rooted (nonbinary) phylogenetic $\cX$-trees.

\begin{theorem}
Given two rooted (nonbinary) phylogenetic $\cX$-trees, by calling $$\textsc{allMulMAAFs}(T_1,\{T_2\},\emptyset,k)$$ all relevant maximum acyclic agreement forests for $T_1$ and $T_2$ are calculated, if and only if $k\ge h(T_1,T_2)$.
\label{33-lem-maafs}
\end{theorem}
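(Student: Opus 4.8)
The plan is to reduce Theorem~\ref{33-lem-maafs} to the already-established correctness of \textsc{allMulMAFs} (Theorem~\ref{32-th-cor1}) by analysing the single point where the two algorithms differ, namely the post-processing in Case~1b$'$. Writing $\mathrm{refine}(\cF)$ for the family of forests produced from an agreement forest $\cF$ by steps~2--5 of the refinement procedure, the output of \textsc{allMulMAAFs} is exactly $\bigcup_{\cF}\mathrm{refine}(\cF)$, taken over all forests $\cF$ returned by \textsc{allMulMAFs}. I would therefore prove three things: (soundness) every member of every $\mathrm{refine}(\cF)$ is a relevant maximum acyclic agreement forest; (completeness) every relevant maximum acyclic agreement forest lies in some $\mathrm{refine}(\cF)$; and (bound) the size test in step~5 lets exactly the forests of the minimum acyclic size survive precisely when $k\ge h(T_1,T_2)$.

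\textbf{Soundness.} By Corollary~\ref{32-cor-cor1} each $\cF$ entering Case~1b$'$ is already a relevant agreement forest, so it suffices to control the refinement loop. I would first record that $\cF$ is acyclic if and only if $AG^{\text{ex}}(T_1(\cF),T_2(\cF),\cF)$ is acyclic, and that every directed cycle of this graph alternates between the hybrid-edge sets $H_1$ and $H_2$ and hence passes through at least one exit node; consequently fixing the exit nodes of a minimum directed feedback set dissolves all cycles, and the iteration in step~5b terminates with an acyclic forest. Minimality of the number of components is what makes the result a \emph{maximum} acyclic agreement forest: I would argue that fixing an exit node $v$ amounts to cutting precisely the edges on the path from $v$ to the root of its component, so that the number of new components needed to remove all cycles equals the size of a minimum feedback solution of $AG^{\text{ex}}$; branching over exit nodes in step~4 together with the test $|\cF'|\le k$ in step~5 then retains only forests of least size. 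Relevance is preserved because an exit-node fix only severs edges lying on a root-to-node path inside a single component and never merges components or over-resolves one, so no contractible edge can be created.

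\textbf{Completeness and bound.} Let $\cF^{*}$ be an arbitrary relevant maximum acyclic agreement forest, and let $\cF$ be the coarser forest obtained by re-contracting exactly those cuts of $\cF^{*}$ that are forced only by acyclicity. Then $\cF$ is again an agreement forest for $T_1$ and $T_2$, and the cuts just removed reappear as the edges fixed when the corresponding exit nodes of $AG^{\text{ex}}(T_1(\cF),T_2(\cF),\cF)$ are selected; since $\cF^{*}$ uses a minimum number of such cuts, this fix-set is an optimal feedback solution and is therefore visited by the exhaustive branching of steps~4--5b, giving $\cF^{*}\in\mathrm{refine}(\cF)$. It remains to certify that this $\cF$ is itself a relevant maximum agreement forest, for then Theorem~\ref{32-th-cor1} guarantees it is returned by \textsc{allMulMAFs}; here I would lean on the binary-resolution correspondence underlying Lemma~\ref{32-lem-cor2} and on matching $AG^{\text{ex}}$ to the expanded cycle graph of Whidden \textit{et al.}~\cite{Whidden2011}. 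Finally, a relevant maximum acyclic agreement forest has $h(T_1,T_2)+1$ components, so the bound $|\cF'|\le k$ admits it exactly when $k\ge h(T_1,T_2)$; for $k<h(T_1,T_2)$ every acyclic agreement forest exceeds the bound and Case~1a together with the size test empties the output, which yields the stated equivalence.

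\textbf{Main obstacle.} The delicate step is completeness, and specifically the claim that re-contracting the acyclicity cuts of a relevant \emph{maximum acyclic} agreement forest yields a \emph{maximum} agreement forest that \textsc{allMulMAFs} enumerates. In the binary world a minimum acyclic forest need not sit above any minimum agreement forest, so this is exactly where the nonbinary freedom must be used: I expect the heart of the argument to be that, because components are only required to be \emph{refinements} of the restricted input trees, the soft multifurcations left by a nonbinary maximum agreement forest can always be resolved so that the minimal exit-node refinement reaches $\cF^{*}$, and that the exit-node branching is provably exhaustive over all minimum feedback solutions of $AG^{\text{ex}}$. Pinning down this correspondence --- that no relevant maximum acyclic agreement forest escapes as a minimal refinement of an enumerated maximum agreement forest --- is the crux on which the whole theorem rests.
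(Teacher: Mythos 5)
Your soundness discussion and the size-bound argument track the paper's Lemma~\ref{33-lem-e2} reasonably well: the paper outsources exactly the two facts you need --- that fixing exit nodes makes monotone progress toward acyclicity and that exit-node fixing is exhaustive for reaching all acyclic refinements --- to Observations~\ref{33-obs-exp1} and~\ref{33-obs-exp2}, proved in Li's thesis \cite{Li2014}. But your completeness argument has a genuine gap, and it sits precisely at the point you yourself flagged as the ``main obstacle''. You propose to re-contract the acyclicity cuts of a relevant maximum acyclic agreement forest $\cF^{*}$ to obtain a coarser forest $\cF$ and then to certify that $\cF$ is a relevant \emph{maximum} agreement forest, so that Theorem~\ref{32-th-cor1} guarantees \textsc{allMulMAFs} enumerates it. That certification is false in general: a maximum acyclic agreement forest need not sit above any maximum agreement forest, and the ``nonbinary freedom'' you hope will rescue the claim cannot do so, because two binary trees are a special case of nonbinary input --- there the refinements of restricted subtrees are the restrictions themselves, so any binary instance in which no MAAF arises from a MAF by cutting is equally a nonbinary counterexample. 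This is exactly why the algorithm applies the refinement of Case~1b$'$ to \emph{every} agreement forest reaching that case (the output of \textsc{allMulMAFs} is a strict superset of the relevant maximum agreement forests), not only to maximum ones.

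The paper avoids your obstacle by proving a deliberately weaker intermediate statement. Its Lemma~\ref{33-lem-e1} asserts only that for each relevant maximum acyclic agreement forest $\cF$ there is \emph{some} relevant agreement forest $\cF'$ --- with no maximality requirement --- among the forests computed by \textsc{allMulMAFs} that can be turned into $\cF$ by first resolving multifurcations and then cutting edges; and it establishes this not by re-contraction but by descending to binary resolutions: the binary correctness of \scAMc~(\cite{Albrecht2015b}, Theorem~3) supplies a binary agreement forest $\hat\cF$ together with an edge set $\hat E$ whose cutting yields a binary MAAF, and the correspondence of Lemma~\ref{32-lem-cor2} transports this to a resolution-plus-cut of the nonbinary forest $\cF'$ with $|E|=|\hat E|$. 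To repair your proof you would have to replace the re-contraction step by this descent through binary resolutions, or otherwise show directly that the exit-node refinement of some enumerated --- possibly non-maximum --- relevant agreement forest reaches $\cF^{*}$; as written, the completeness half does not go through.
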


\begin{proof}
The correctness of the algorithm as stated in Theorem~\ref{33-lem-maafs} directly depends on the following two Lemmas~\ref{33-lem-e1}~and~\ref{33-lem-e2}.

\begin{lemma}
Let $T_1$ and $T_2$ be two rooted (nonbinary) phylogenetic $\cX$-trees and let $\cF$ be a relevant maximum acyclic agreement forest $T_1$ and $T_2$. Then, a relevant agreement forest $\cF'$ by calling $\textsc{allMulMAFs}(T_1,T_2,\emptyset,h(T_1,T_2))$ is calculated that can be turned into $\cF$ by first resolving some of its multifurcating nodes and then by cutting some of its edges.
\label{33-lem-e1}
\end{lemma}

\begin{proof}
As the first point holds for the algorithm \scAMc~\cite{Albrecht2015b}[Theorem~3], from Lemma~\ref{32-lem-cor2} we can deduce that this has to hold for the algorithm \textsc{allMulMAFs} as well. More precisely, let $T_1$ and $T_2$ be two rooted (nonbinary) phylogenetic $\cX$-trees and let $\hat\cF$ be a binary agreement forest for $T_1(\cF)$ and $T_2(\cF)$ that can be turned into a maximum acyclic agreement forest for $T_1(\cF)$ and $T_2(\cF)$ by cutting some of its edges. Then, due to Lemma~\ref{32-lem-cor2}, by calling \scAMc$(T_1,T_2,\emptyset,h(T_1,T_2))$ a relevant acyclic agreement forest $\cF$ for $T_1$ and $T_2$ is calculated such that $\hat\cF$ is a binary resolution of $\cF$. Moreover, as $\hat\cF$ can be turned into a maximum acyclic agreement forest by cutting some of its edges $\hat E$, $\cF$ can be turned into a relevant maximum acyclic agreement forest as well by first resolving some of its nodes and then by cutting a certain edge set $E$ with $|\hat E|=|E|$. More specifically, for each edge $\hat e$ in $\hat E$ there exists an edge $e$ that can be obtained from $\cF$ as follows. Let $\hat e=(\hat v,\hat w)$ be an edge in $\hat E$ of a component $\hat F$ in $\hat \cF$, then, as $\hat \cF$ is a binary resolution of $\cF$, $\cF$ has to contain a component $F$ with node $w'$ such that $\cL(\hat F(\hat w))\subseteq\cL(F(w'))$. Now, $e$ is the in-edge of a node $w$ that can be obtained from resolving node $w'$ such that $\cL(F(\hat w))=\cL(F(w))$.
\end{proof}

\begin{lemma}
Given two rooted (nonbinary) phylogenetic $\cX$-trees $T_1$ and $T_2$ as well as a relevant agreement forest $\cF$ for $T_1$ and $T_2$, the refinement step resolves a minimum number of nodes and cuts a minimum number of edges such that $\cF$ is turned into all relevant acyclic agreement forests of minimum size.
\label{33-lem-e2}
\end{lemma}

\begin{proof}
Due to the following two observations that are both discussed in the master thesis of Li \cite{Li2014}, the refinement procedure, which is based on fixing exit nodes as described above, leads to the computation of acyclic agreement forests.

\begin{observation}
Let $\cF$ be an agreement forest for two rooted phylogenetic $\cX$-trees $T_1$ and $T_2$ and let $\cF'$ be an agreement forest that is produced by fixing an exit node of $AG^\text{ex}(T_1,T_2,\cF)$. Then, the set of exit nodes corresponding to $AG^\text{ex}(T_1,T_2,\cF')$ is a subset of the set of exit nodes corresponding to $AG^\text{ex}(T_1,T_2,\cF)$.
\label{33-obs-exp1}
\end{observation}

\begin{observation}
Given an agreement forest $\cF$ for two rooted phylogenetic $\cX$-trees, there exists an acyclic agreement forest $\cF'$, if and only if there exists a set of exit nodes such that fixing theses nodes leads to the computation of $\cF'$.
\label{33-obs-exp2}
\end{observation}

A formal proof showing the correctness of these two observations can be looked up in the master thesis of Li~\cite{Li2014}. More precisely, Observation~\ref{33-obs-exp1} is a consequence of \cite[Lemma~12~and~13]{Li2014}, which ensures that by fixing an exit node one makes progress towards an acyclic agreement forest, and Observation~\ref{33-obs-exp2} is a consequence of  \cite[Lemma~10]{Li2014}, which ensures that it is possible to obtain all relevant maximum acyclic agreement forests from applying the refinement procedure. Notice that, as by fixing exit nodes a minimum number of nodes are resolved and a minimum number of edges are cut, each resulting maximum acyclic agreement forest is automatically relevant. 
\end{proof}

Now, from those two separate proofs each regarding two successive parts, namely the computation of specific relevant agreement forests followed by the refinement procedure establishing the acyclicity of each those forests, we can finally finish the proof of Theorem~\ref{33-lem-maafs}. 
\end{proof}

\subsection{Runtime of \textsc{allMulMAAFs}}

In this section, we discuss the runtime of the algorithm \textsc{allMulMAAFs} in detail.

\begin{theorem}
Let $T_1$ and $T_2$ be two rooted phylogenetic $\cX$-trees and $\cF$ be a maximum agreement forest for $T_1$ and $T_2$ containing $k$ components. The theoretical worst-case runtime of the algorithm \textsc{allMulMAAFs} applied to $T_1$ and $T_2$ is $O(3^{|\cX|+k}4^k|\cX|)$.
\end{theorem}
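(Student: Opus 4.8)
The plan is to build directly on the runtime bound for \textsc{allMulMAFs} established in Theorem~\ref{32-th-rt1} and to charge the additional cost solely to the refinement step (Case~1b'). Since \textsc{allMulMAAFs} differs from \textsc{allMulMAFs} only in that Case~1b is replaced by Case~1b', the two algorithms share the same recursive search tree; by Theorem~\ref{32-th-rt1} this tree has $O(3^{|\cX|+k})$ nodes and every recursive call outside the refinement step still costs $O(|\cX|)$. Hence it suffices to bound the cost of a single invocation of the refinement procedure (steps~(1)--(5)) by $O(4^k|\cX|)$; multiplying this by the number $O(3^{|\cX|+k})$ of leaves of the search tree then yields the claimed total runtime $O(3^{|\cX|+k}4^k|\cX|)$.

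First I would bound the per-node cost of the recursion hidden in step~(5b). Each of steps~(1)--(4) --- computing $T_1(\cF)$ and $T_2(\cF)$, building $AG^{\text{ex}}(T_1(\cF),T_2(\cF),\cF)$, extracting its exit nodes, and fixing one of them --- can be carried out in $O(|\cX|)$ time, so it remains only to bound the number of nodes of the refinement search tree by $O(4^k)$. The heart of the argument rests on two structural facts. On the one hand, every exit node is by definition the source of a hybrid edge of $AG^{\text{ex}}$, and since $\cF$ contains $k$ components there are exactly $2k-2$ hybrid edges; thus at most $2k-2$ distinct exit nodes ever occur. On the other hand, by Observation~\ref{33-obs-exp1} the set of exit nodes of the refined graph $AG^{\text{ex}}(T_1,T_2,\cF')$ is always a subset of that of $AG^{\text{ex}}(T_1,T_2,\cF)$, so fixing exit nodes is monotone and never generates new ones during the recursion.

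Combining these two facts, every node of the refinement search tree is determined by the subset of the at most $2k-2$ original exit nodes that have been fixed along the path from the root, and this subset grows strictly as we descend. Since a set of size $2k-2$ has at most $2^{2k-2}=O(4^k)$ subsets, the refinement tree has $O(4^k)$ nodes, each processed in $O(|\cX|)$ time, giving the desired $O(4^k|\cX|)$ bound per reported forest. Adding the two contributions, the total runtime is $O(3^{|\cX|+k})\cdot O(|\cX|)$ for the \textsc{allMulMAFs} search tree plus $O(3^{|\cX|+k})\cdot O(4^k|\cX|)$ for applying the refinement at its leaves, which is $O(3^{|\cX|+k}4^k|\cX|)$.

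The main obstacle is precisely the $O(4^k)$ bound on the refinement tree: a priori, each fixing of an exit node enlarges the forest and therefore produces a graph $AG^{\text{ex}}(\cdot,\cdot,\cF')$ with more hybrid edges, so one must rule out an uncontrolled regeneration of fresh exit nodes down the recursion. This is exactly what the monotonicity guaranteed by Observation~\ref{33-obs-exp1} provides, confining all exit nodes encountered to the fixed pool of at most $2k-2$ hybrid-edge sources and reducing the branching to a search over subsets of that pool rather than an unbounded branching process.
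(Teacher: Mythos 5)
Your proof is correct and follows essentially the same route as the paper: it reuses the $O(3^{|\cX|+k})$ recursive-call bound from Theorem~\ref{32-th-rt1} and bounds each refinement invocation by the $2^{O(k)}=O(4^k)$ subsets of the at most $O(k)$ exit nodes (sources of hybrid edges), each processed in $O(|\cX|)$ time. Your explicit appeal to Observation~\ref{33-obs-exp1} to rule out regeneration of fresh exit nodes is simply a more careful spelling-out of the paper's terse remark that the order of fixing exit nodes is irrelevant, so only subsets of the original exit-node pool need be considered.
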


\begin{proof}
As stated in Theorem~\ref{32-th-rt1}, the algorithm has to conduct $O(3^{|\cX|+k})$ recursive calls. Potentially, for each of those recursive calls we have to apply a refinement step whose theoretical worst-case runtime can be estimated as follows. First notice that the order of fixing exit nodes is irrelevant. Thus, in an expanded ancestor-descendant graph, corresponding to an agreement forest of size $k+1$ and, hence, containing $2k$ exit nodes, at most $2^{2k}$ different sets of potential exit nodes have to be considered. As the processing of such a set of potential exit nodes takes $O(|\cX|)$ time, the theoretical worst-case runtime of the algorithm is $O(3^{|\cX|+k}4^k|\cX|)$. 
\end{proof}

In general, however, due to the following observation, the runtime of the refinement step is not a problem when computing maximum acyclic agreement forests of size $k$. Either the size $k'$ of an agreement forest $\cF$ is close to $k$ and, thus, fixing an exit node immediately leads to an agreement forest of size larger than $k$ (and, consequently, most of the sets of potential exit nodes have not to be considered in full extend). Otherwise, if the size $k'$ of $\cF$ is small and, thus, the gap between $k'$ and $k$ is large, the expanded ancestor-descendant graph is expected to contain no or at least only less cycles (and, consequently, there exist only few sets of potential exit nodes). Nevertheless, in the master thesis of Li \cite{Li2014} a method is presented that allows to half the number of exit nodes that have to be taken into account throughout the refinement of an agreement forest, so that by applying this modification the algorithm yields a theoretical worst-case runtime of $O(3^{|\cX|+k}2^kk)$.

\subsection{Robustness of our Implementation}
\label{sec-rob}

In order to make the algorithm available for research, we added an implementation to our Java based software package \textsc{Hybroscale} providing a graphical user interface, which enables a user friendly interactive handling. Next, we conducted two specific test scenarios demonstrating the robustness of our implementation which means, in particular, that \textsc{Hybroscale} guarantees the computation of all relevant nonbinary acyclic agreement forests for two rooted (nonbinary) phylogenetic $\cX$-trees. Each of those test scenarios was conducted on a particular synthetic dataset, which was generated as described below.

\subsubsection{Synthetic dataset}
\label{33-sec-syndat}

Our synthetic dataset consists of several tree sets each containing two rooted (nonbinary) phylogenetic $\cX$-trees. Each $\cX$-tree is generated by ranging over all different combinations of four parameters, namely the number of leaves $\ell$, an upper bound for the hybridization number $k$, the \emph{cluster degree} $c$, and an additional parameter $p$. Each of both trees of a particular tree set corresponds to an embedded tree $T$ of a particular network $N$ only containing hybridization nodes of in-degree $2$. With respect to the four different parameters such a tree $T$ is computed as follows. First a random binary tree $\hat T$ containing $\ell$ leaves is computed. This is done, in particular, by randomly selecting two nodes $u$ and $v$ of a specific set $V$, which is initialized by creating $\ell$ nodes of both in- and out-degree $0$. The two selected nodes $u$ and $v$ are then connected to a new node $w$. Finally, $V$ is updated by replacing $u$ and $v$ by its parent node $w$. This is done until $V$ only consists of one node corresponding to the root of $\hat T$. In a second step, $k$ reticulation edges are inserted in $\hat T$ with respect to parameter $c$ such that the resulting network $N$ contains precisely $k$ reticulation nodes of in-degree $2$. Finally, after extracting a binary $T'$ from $N$, based on parameter $p$, a certain percentage of its edges are contracted such that a nonbinary tree $T$ is obtained from $T'$. 

In this context, the \emph{cluster degree} is an \emph{ad hoc} concept influencing the computational complexity of a tree set similar to the concept of the \emph{tangling degree} first presented in the work of Albrecht \textit{et al.} \cite{Albrecht2011} (cf.~Fig.~\ref{33-fig-clusDeg}). When adding a reticulation edge $e$ with target node $v_2$ and source node $v_1$, we say that $e$ respects the cluster degree $c$, if $v_1$ cannot be reached from $v_2$ and there is a path of length less than or equal to $c$ leading from $v_2$ to a certain node $p$ such that $v_1$ can be reached from $p$. This means, in particular, that networks respecting a small cluster degree, in general, contain more minimum common clusters than networks respecting a large cluster degree and, thus, often provide a smaller computational complexity when applying a cluster reduction beforehand. 

\begin{figure}[bt]
\centering 
\includegraphics[scale=1]{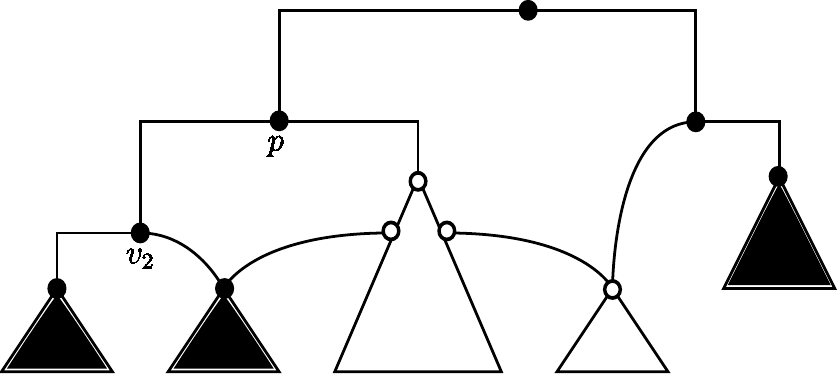}
\caption[An illustration of the cluster degree parameter]{An illustration of the cluster degree parameter. Given a cluster degree $c=1$. When inserting an in-going edge $e$ to node $v_2$ that is respecting $c$, each node that is marked white or is part of a white marked subnetwork forms a potential source node.} 
\label{33-fig-clusDeg}
\end{figure}

\subsubsection{Comparison with other software}

First, we generated a synthetic dataset, as described above, containing tree pairs each consisting of two rooted (nonbinary) phylogenetic $\cX$-trees with parameters $\ell \in \{10,25,50\}$, $k \in \{5,10,15\}$, $c \in \{1,3,5\}$, and $p \in \{30\}$. More specifically, for all $81$ combinations of the four parameters $30$ tree sets were generated resulting in $810$ tree sets in total. Next, based on this dataset, we compared the result of our implementation to the two software packages \textsc{Dendroscope}\footnote{\url{ab.inf.uni-tuebingen.de/software/dendroscope/}} \cite{Huson2011} and \textsc{TerminusEst}\footnote{\url{skelk.sdf-eu.org/terminusest/}} \cite{Piovesan2014} so far being the only known available software packages computing exact hybridization numbers for two rooted (nonbinary) phylogenetic $\cX$-trees.

Our simulation study pointed out that our implementation could always reproduce the hybridization numbers that were computed by both software packages \textsc{Dendroscope} and \textsc{TerminusEst}. Moreover, the number of maximum acyclic agreement forests computed by our algorithm was always larger than the number of networks that were reported by \textsc{Dendroscope} and \textsc{TerminusEst}. Notice that, regarding \textsc{TerminusEst}, this is not surprising as this program does only output one network. This fact, however, gives further indication that our program is actually able to compute all relevant maximum acyclic agreement forests. Nevertheless, we applied a further test scenario examining this fact in more detail. 

\subsubsection{Permutation test}

To check the robustness of our implementation in more detail, we generated a further synthetic dataset containing thousands of tree pairs of low computational complexity, such that each of those tree pairs could be processed by our implementation within less than a minute. More precisely, the dataset contains tree pairs that have been generated in respect to precisely one value for each of the four parameters $\ell$, $k$, $c$, and $p$, i.e., $\ell \in \{10\}$, $k \in \{5\}$, $c \in \{1\}$, and $p \in \{30\}$. Next, for each of those tree pairs, we computed two sets of relevant maximum acyclic agreement forests each corresponding to one of both orderings of the two input trees and compared both results. 

For each of those tree pairs, both sets of maximum acyclic agreement forests were identical, which means that each maximum acyclic agreement forest that could be computed was always contained in both sets. Notice that by switching the order of the input trees our algorithm runs through different recursive calls, which means that each computational path leading to a maximum acyclic agreement forest usually differs. Nevertheless, due to the fact that the hybridization number is independent from the order of the input trees, those two sets of maximum acyclic agreement forests have to be identical. As we applied this permutation test to thousands of different tree pairs, this is a further strong indication that \textsc{Hybroscale} is actually able to compute all relevant maximum acyclic agreement forests for two rooted (nonbinary) phylogenetic $\cX$-trees.

\subsection{Conclusion}

In this section, we have presented the algorithm \textsc{allMulMAAFs} computing a set of relevant maximum acyclic agreement forests for two rooted nonbinary phylogenetic $\cX$-trees. \textsc{allMulMAAFs} was developed in respect to the algorithm \textsc{allHNetworks} computing a particular set of minimum hybridization networks for two rooted binary phylogenetic $\cX$-trees and is considered to be a first step for making this algorithm accessible to nonbinary phylogenetic $\cX$-trees. Additionally, we have established a formal proof showing that the algorithm \textsc{allMulMAAFs} guarantees the computation of all relevant nonbinary maximum acyclic agreement forests. 

Moreover, we have integrated our algorithm into the freely available software package \textsc{Hybroscale} and, by conducting two specific test scenarios, we have demonstrated the robustness of our implementation. In the next section, we will demonstrate how this algorithm can be used to extend the algorithm \textsc{allHNetworks} so that now minimum hybridization networks displaying the refinements of multiple rooted nonbinary phylogenetic $\cX$-trees can be calculated.

\section{Discussion}

In this work, we have presented the algorithm \textsc{allMulMAAFs} computing a set of relevant maximum acyclic agreement forests for two rooted nonbinary phylogenetic $\cX$-trees. \textsc{allMulMAAFs} was developed in respect to the algorithm \textsc{allHNetworks} computing a certain set of minimum hybridization networks for two rooted binary phylogenetic $\cX$-trees and is considered to be the first step to make this algorithm accessible to nonbinary phylogenetic $\cX$-trees. Additionally, we have provided formal proofs showing that the algorithm \textsc{allMulMAAFs} always guarantees the computation of all relevant nonbinary maximum acyclic agreement forests. Moreover, we have integrated our algorithm into the freely available software package \textsc{Hybroscale} and by conducting particular test scenarios, we have demonstrated the robustness of our implementation. It is part of ongoing future work to push on the extension of the algorithm \textsc{allHNetworks} in order to enable the computation of minimum hybridization networks displaying the refinements of multiple rooted nonbinary phylogenetic $\cX$-trees.

\end{document}